\newtheoremstyle{sltheorem}
{}                %
{}                %
{\slshape}        %
{}                %
{\bfseries}       %
{.}               %
{ }               %
{}                %
\theoremstyle{sltheorem}
\DeclareMathOperator*{\E}{\mathbb{E}}
\let\Pr\relax
\DeclareMathOperator*{\Pr}{\mathbb{P}}
\def\reals{\mathbb{R}}
\newcommand{\POLY}{\text{POLY}}
\newcommand{\eps}{\varepsilon}
\newcommand{\SOLD}{\text{SOLD}}
\newcommand{\bS}{\mathbb{S}}
\newcommand{\bT}{\mathbb{T}}
\newcommand{\vB}{\mathbf{B}}
\newcommand{\vp}{\vec{p}}
\newcommand{\vv}{\vec{v}}
\newcommand{\calM}{\mathcal{M}}
\newcommand{\calC}{\mathcal{C}}
\newcommand{\calO}{\mathcal{O}}
\newcommand{\calR}{\mathcal{R}}
\newcommand{\bcalC}{\overline{\mathcal{C}}}
\newcommand{\calA}{\mathcal{A}}
\newcommand{\calQ}{\mathcal{Q}}
\newcommand{\calD}{\mathcal{D}}
\newcommand{\calB}{\mathcal{B}}
\newcommand{\calV}{\mathcal{V}}
\newcommand{\Rev}{\text{Rev}}
\newcommand{\LW}{\text{LW}}
\newcommand{\p}{\vec{p}}
\newcommand{\bbv}{\mathbf{\bar{v}}}
\newcommand{\bV}{\bar{V}}
\newcommand{\bv}{\bar{v}}
\newcommand{\halpha}{\hat{\alpha}}
\newcommand{\1}{\mathbbm{1}}
\newcommand\numberthis{\addtocounter{equation}{1}\tag{\theequation}}
\newcommand{\OPT}{\text{OPT}}
\newcommand{\bOPT}{\overline{\text{OPT}}}
\newcommand{\ALG}{\text{ALG}}
\newcommand{\calS}{\mathcal{S}}
\def\DQ{\mathrm{DQ}}
\def\BCDQ{\mathrm{BCDQ}}
\newcounter{dummy} \numberwithin{dummy}{section}
\newtheorem{theorem}[dummy]{Theorem}
\newtheorem{lemma}[dummy]{Lemma}
\newtheorem{definition}[dummy]{Definition}
\newtheorem{proposition}[dummy]{Proposition}
\begin{document}
\title{A Bridge between Liquid and Social Welfare in Combinatorial Auctions with Submodular Bidders}
\author{Dimitris Fotakis\thanks{Yahoo! Research, New York, NY USA and National Technical University of Athens, Athens, Greece, \tt{fotakis@\{oath.com;cs.ntua.gr\}}} 
\and Kyriakos Lotidis\thanks{National Technical University of Athens, Athens, Greece, \tt{klot@corelab.ntua.gr}}
\and Chara Podimata\thanks{Harvard University, Cambridge, MA, USA, \tt{podimata@g.harvard.edu}}
}

\maketitle
\begin{abstract}
We study incentive compatible mechanisms for Combinatorial Auctions where the bidders have submodular (or XOS) valuations and are budget-constrained. Our objective is to maximize the \emph{liquid welfare}, a notion of efficiency for budget-constrained bidders introduced by Dobzinski and Paes Leme (2014). We show that some of the known truthful mechanisms that best-approximate the social welfare for Combinatorial Auctions with submodular bidders through demand query oracles can be adapted, so that they retain truthfulness and achieve asymptotically the same approximation guarantees for the liquid welfare. More specifically, for the problem of optimizing the liquid welfare in Combinatorial Auctions with submodular bidders, we obtain a universally truthful randomized $O(\log m)$-approximate mechanism, where $m$ is the number of items, by adapting the mechanism of Krysta and V{\"o}cking (2012). 

Additionally, motivated by large market assumptions often used in mechanism design, we introduce a notion of competitive markets and show that in such markets, liquid welfare can be approximated within a constant factor by a randomized universally truthful mechanism. Finally, in the Bayesian setting, we obtain a truthful $O(1)$-approximate mechanism for the case where bidder valuations are generated as independent samples from a known distribution, by adapting the results of Feldman, Gravin and Lucier (2014).
\end{abstract}

\section{Introduction}\label{sec:intro}

Imagine that you are a social planner wanting to auction-off the seats of a local stadium at an extremely wealthy neighborhood (i.e., people have \emph{no} budget constraints for the seats) for a big concert. As a social planner, your goal is to allocate the seats in a way that maximizes (or, at least, approximates as closely as possible) the happiness of the people interested in these seats. However, different people have different seat preferences; some people are happy with two consecutive seats anywhere in the stadium, and some might want a whole row. Phrased in mechanism design language, this is a \emph{Combinatorial Auction}, where you seek to optimize the \emph{social welfare} by a truthful mechanism. Combinatorial Auctions, like the one above, appear in many AI-centric contexts (e.g., spectrum auctions, network routing auctions \citep{HS01}, airport time-slot auctions \citep{RSB82}, etc.) and have been a central topic in the study of Multi-Agent Systems. They have also experienced a recent interest in the AI community with works employing ML algorithms to overcome standard complexity problems (e.g., \citep{BL18,BLS18}). 

As if this problem was not hard enough to solve, imagine that you find out two unfortunate events; the stadium is in fact at a working-middle class neighborhood (i.e., people \emph{do} have budget constraints) and your boss is concerned about the effect of these budget constraints on the potential revenue. Now, the objective function should balance between the willingness and the ability of the people to pay for their seats. Motivated by usual discrepancies between the auction participants' ability and willingness to pay, \citet{DPL14} introduced the notion of \emph{liquid welfare}, which is the minimum of an agent's budget and valuation for a bundle of goods. As such, maximizing the liquid welfare achieves a reasonable compromise between social efficiency and potential for revenue extraction (which is constrained by the budgets). 

\smallskip\noindent{\bf Problem Definition.} 
More formally, a Combinatorial Auction (CA) consists of a set $U$ of $m$ items to be allocated to $n$ bidders. Each bidder $i$ has a valuation function $v_i : 2^U \to \reals_{\geq 0}$. Valuation functions, $v$, are assumed to be non-decreasing, i.e., $v(S) \leq v(T)$, for all $S \subseteq T \subseteq U$, and normalized $v(\emptyset) = 0$. For the objective of social welfare (SW), the goal is to compute a partitioning $\mathcal{S} = (S_1, \ldots, S_n)$ of the set of items, $U$, that maximizes $v(\mathcal{S}) = \sum_{i=1}^n v_i(S_i)$. For the objective of liquid welfare (LW), we assume that each bidder $i$ also has a budget $B_i \in \reals_{\geq 0}$ and the liquid welfare that can be extracted from agent $i$ for each set of items $S \subseteq U$ is $\bar{v}_i(S) = \min\{ v_i(S), B_i \}$\footnote{Slightly abusing the terminology, we refer to $\bv_i(S)$ as agent $i$'s \emph{liquid valuation}.}. Under this objective, the goal is to compute a partitioning $\mathcal{S} = (S_1, \ldots, S_n)$ of $U$ that maximizes $\bar{v}(\mathcal{S}) = \sum_{i=1}^n \bar{v}_i(S_i)$. 

We focus on CAs with submodular or XOS bidders. A set function $v: 2^U \to \reals_{\geq 0}$ is \emph{submodular} if for every $S, T \subseteq U$, $v(S) + v(T) \geq v(S \cap T) + v(S \cup T)$ and \emph{subadditive} if $v(S) + v(T) \geq v(S \cup T)$. A set function $v$ is \emph{XOS} (a.k.a. \emph{fractionally subadditive}, see \citep{F09}) if there exist additive functions $w_k  : 2^U \to \reals_{\geq 0}$ such that for every $S \subseteq U$, $v(S) = \max_{k} \{ w_k(S) \}$. The class of submodular functions is a proper subset of the class of XOS functions, which in turn is a proper subset of the class of subadditive functions.

Since bidder valuations have exponential size, a polynomial (in $m$ and $n$) algorithm must have oracle access to them. A \emph{value query} specifies a set $S \subseteq U$ and receives the value $v(S)$. A \emph{demand query}, denoted by $\DQ(v, U, \vec{p})$, specifies a valuation function $v$, a set $U$ of available items and a price $p_j$ for each available item $j \in U$, and receives the set (or bundle) $S \subseteq U$ maximizing $v(S) - \sum_{j \in S} p_j$, i.e., the set of available items that maximizes bidder's utility at these prices. For brevity, we often write $p(S) = \sum_{j \in S} p_j$ to denote the price of a bundle $S$. Demand queries are strictly more powerful than value queries. Value queries can be simulated by polynomially many demand queries, and in terms of communication cost, demand queries are exponentially stronger than value queries \citep{BN09}. Our mechanisms are polynomial-time, given access to demand oracles, which in general can be NP-hard to compute.

\subsection{Previous Work on Social Welfare}

Truthful maximization of SW in CAs with submodular or XOS bidders has been a central problem in Algorithmic Mechanism Design, with many powerful results. Due to space restrictions, we only discuss results most relevant to our work. While discussing previous work below, we assume XOS bidders and polynomial-time randomized truthful mechanisms that approximate the SW, by accessing valuations through demand queries, unless mentioned otherwise. 

In the \emph{worst-case} setting, where we do not make any further assumptions on bidder valuations, \citet{DNS06} presented the first truthful mechanism with a non-trivial approximation guarantee of $O(\log^2 m)$. \citet{D07} improved the approximation ratio to $O(\log m \log \log m)$ for the more general class of subadditive valuations. Subsequently, \citet{KV12} provided an elegant randomized online mechanism that achieves an approximation ratio of $O(\log m)$ for XOS valuations. \citet{D16} broke the logarithmic barrier for XOS valuations, by providing an approximation guarantee of $O(\sqrt{ \log m })$. We highlight that accessing valuations through demand queries is essential for these strong positive results. \citet{D11} proved that any truthful mechanism for submodular CAs with approximation ratio better than $m^{\frac{1}{2}-\eps}$ must use \emph{exponentially} many value queries. 

In the \emph{Bayesian} setting, bidder valuations are drawn as independent samples from a known distribution. \citet{FGL14} showed how to obtain item prices that provide a constant approximation ratio for XOS valuations. These results were significantly extended and strengthened in the recent work of \citet{DFKL17}, and a (truly) polynomial algorithm was provided as well.

\subsection{Intuition, Main Ideas, and Contribution}

Our aim is to extend these results to the objective of LW. To this end, we exploit the fact that most of the mechanisms above follow a simple pattern: first, by exploring either part of the instance in \cite{KV12} or the knowledge about the valuation distribution in \cite{FGL14}, the mechanism computes appropriate (a.k.a. \emph{supporting}) prices for all items. Then, these prices are ``posted'' to the bidders, who arrive one-by-one and select their utility-maximizing bundle, through a demand query, from the set of available items (see Algorithm~\ref{algo:pp}). 

The technical intuition behind the high level approach above is nicely explained in \cite[Section~1.2]{D16}. Let $\calO = (O_1, \ldots, O_n)$ be an optimal solution for the SW (in fact, any constant factor approximation suffices). The supporting price of item $i$ in $\mathcal{O}$ is $q_j = w_k(\{ j \})$, where $w_k$ is the additive valuation determining the value $v_i(O_i)$ (recall that valuation functions are XOS). Intuitively, $q_j$ is how much item $j$ contributes to the social welfare of $\mathcal{O}$. Then, a price of $p_j = q_j/2$ for each item $j$ is appropriate in the sense that a constant approximation to $v(\mathcal{O})$ can be obtained by letting the bidders arrive one-by-one, in an arbitrary order, and allocating to each bidder $i$ her utility maximizing bundle, chosen from the set of available items by a demand query (see \cite[Lemma~4.2]{D16}). 

Hence, approximating the SW by demand queries boils down to computing such prices $p_j$. In the Bayesian setting, prices $p_j$ can be obtained by drawing $n$ samples from the valuation distribution and computing the expected contribution of each item $j$ to a constant factor approximation of the optimal allocation (see Section~3 and Lemma~3.4 in \cite{FGL14}). Similarly, the idea of estimating the contribution of the items would work under some market uniformity assumption, as the one introduced in Definition~\ref{def:competitive}. In the worst-case setting, if we assume integral and polynomially-bounded valuations (i.e., that $\max_{i} \{ v_i(U) \} \leq m^d$, for some constant $d$), a uniform price for all items selected at random from $1, 2, 4, 8, \ldots, 2^{d \log m}$ results in an logarithmic approximation ratio. \citet{KV12} show how to estimate supporting prices online, by combining binary search and randomized rounding. Importantly, as long as each bidder does not affect the prices offered to her, this general approach results in (randomized, universally) truthful mechanisms.

Towards extending the above approach and results to the LW, our first observation (Lemma~\ref{lem:val-liqval}) is that if a valuation function $v$ is submodular (resp. XOS), then the corresponding liquid valuation function $\bar{v} = \min\{ v, B \}$ is also submodular (resp. XOS). Then, one can directly use the mechanisms of e.g., \citep{KV12,D16,FGL14} with valuation functions $\bar{v} = \min\{ v, B \}$ and demand queries of the form: $\DQ(\min\{v, B\}, U, \vec{p})$ (i.e., wrt. the liquid valuation of the bidders) and obtain the same approximation guarantees but now for the LW. However, the resulting mechanisms are no longer truthful; bidders still seek to maximize their \emph{utility} (i.e., value minus price) from the bundle that they get, subject to their budget constraint, rather than their \emph{liquid utility} (i.e., liquid value minus price). Specifically, given a set of items $U$ available at prices $p_j$, $j \in U$, a budget-constrained bidder $i$ wants to receive the bundle $S_i = \arg\max_{S \subseteq U} \{ v_i(S) - p(S)\,|\,p(S) \leq B_i \}$, and might not be happy with the bundle $S'_i = \arg\max_{S \subseteq U} \{ \bar{v}_i(S) - p(S) \}$ computed by the demand query for the liquid valuation%
\footnote{For a concrete example, consider a bidder with budget $B = 2$ and two items $a$ and $b$ available at prices $p_a = 2$ and $p_b = 1$. Assume that the bidder's valuation function is $v(\{ a \}) = v(\{ a, b \}) = 10$, $v(\{ b \}) = 2$ (and therefore, her liquid valuation is $\bar{v}(\{ a \}) =\bar{v}(\{ b \}) = \bar{v}(\{ a, b \}) = 2$). The bidder wants to get item $a$ at price $2$, which gives her utility $8$. However, the demand query for her liquid valuation function $\bar{v}$ allocates item $b$, which gives her a utility of $1$. Clearly, in this example, the bidder would have incentive to misreport her preferences to the demand query oracle.}. 

To restore truthfulness, we replace demand queries with \emph{budget-constrained demand queries}. A budget-constrained demand query, denoted by $\BCDQ(v, U, \vec{p}, B)$, specifies a valuation function $v$, a set of available items $U$, a price $p_j$ for each $j \in U$ and a budget $B$, and receives the set $S \subseteq U$ maximizing $v(S) - p(S)$, subject to $p(S) \leq B$, i.e., the set of available items that maximizes bidder's utility subject to her budget constraint. 

To establish the approximation ratio, we first observe that the fact that liquid valuations are XOS suffices for estimating supporting prices, as in previous work on the SW. Additionally, we show that the bundles allocated by $\BCDQ(v, U, \vec{p}, B)$ approximately satisfy the efficiency guarantees on the liquid welfare and the liquid utility of the allocated bundles (see Lemma~\ref{lem:bfdo}). Specifically, we observe that the approximation guarantees of mechanisms for the SW mostly follow from the fact that a demand query $\DQ(v, U, \vec{p})$ guarantees that for the allocated bundle $S$ and for any $T \subseteq U$, 
\begin{enumerate*}[label=(\roman*)]
\item $v(S) - p(S) \geq v(T) - p(T)$, and  
\item $v(S) \geq v(T) - p(T)$
\end{enumerate*}. In Lemma~\ref{lem:bfdo}, we show that a budget-constrained demand query, $\BCDQ(v, U, \vec{p}, B)$, guarantees that for the allocated bundle $S$ and any $T \subseteq U$,
\begin{enumerate*}[label=(\roman*)]
\item $2 \bar{v}(S) - p(S) \geq \bar{v}(T) - p(T)$, and
\item $\bar{v}(S) \geq \bar{v}(T) - p(T)$.
\end{enumerate*}
Using this property, we can prove the equivalent of \cite[Lemma~4.2]{D16}%
and also the approximation guarantees of the mechanisms in \citet{KV12,FGL14} but for the LW. 

\paragraph{Contribution.}
Formalizing the intuition above, we obtain 
\begin{enumerate*}[label=(\roman*)]
\item a randomized universally truthful mechanism that approximates the LW within a factor of $O(\log m)$ (Section~\ref{sec:worst-case}), and 
\item a posted-price mechanism that approximates the LW within a constant factor when bidder valuations are drawn as independent samples from a known distribution (Section~\ref{sec:stochastic}). 
\end{enumerate*} Both mechanisms assume XOS bidder valuations; the former is based on the mechanism of \citet{KV12} and the latter on the mechanism of \citet{FGL14}. 

Motivated by large market assumptions often used in Algorithmic Mechanism Design (see e.g., \cite{BCIMS05,EFV17,LX17} and the references therein), we introduce a competitive market assumption in Section~\ref{sec:lcma}. Competitive Markets are closer to practice, since they stand in between the stochastic and the worst-case settings, in terms of the assumptions made. The main idea is that when there is an abundance of bidders, even if we remove a random half of them, the optimal LW does not decrease by much. Then, computing supporting prices for all items based on a randomly chosen half of the bidders, and offering these prices through budget-constrained demand queries to the other half, yields a universally truthful mechanism that approximates LW within a constant factor (Theorem~\ref{thm:cm-apx}). 

Conceptually, in this work, we present a general approach through which known truthful approximations to the SW, that access valuations through demand queries, can be adapted so that they retain truthfulness and achieve similar approximation guarantees for the LW. The important properties required are that liquid valuation functions $\bar{v}$ belong to the same class as valuation functions $v$ (proven for submodular, XOS and subadditive valuations), and that the efficiency guarantees of budget-constrained demand queries on liquid welfare and liquid utility are similar to the corresponding efficiency guarantees of standard demand queries for liquid valuations (proven for all classes of valuations functions). Indeed, applying this approach to the mechanism of \citet{D16}, we obtain a universally truthful mechanism that approximates the LW for CAs with XOS bidders within a factor of $O(\sqrt{\log m})$ (the details are omitted due to space constraints). Similarly, we can take advantage of the improved results of \citet{DFKL17} in the Bayesian setting. All the missing proofs can be found in the full version of the paper on \citep{FLP18}.

\subsection{Previous Work on Liquid Welfare}

Liquid welfare was introduced as an efficiency measure for auctions when bidders are budget constrained in \citep{DPL14} (since it was known that getting any non-trivial approximation for the SW in these cases is impossible) and it corresponds to the optimal revenue an omniscient seller could extract from the set of the bidders, had he known their valuations and their budgets. Moreover, \citet{DPL14} proved a $O( \log n )$ (resp. $(\log^2 n)$)-approximation to the optimal LW for the case of a single divisible item and submodular (resp. subadditive) bidders. \citet{DPL14} and \citet{LX15} proved that the optimal LW can be approximated truthfully within constant factor for a single divisible good, additive bidder valuations and public budgets. Closer to our setting, \citet{LX17} provided a truthful mechanism that achieves a constant factor approximation to the LW for multi-item auctions with divisible items, under a large market assumption. Under similar large market assumptions, \citet{EFV17} obtained mechanisms that approximate the optimal revenue within a constant factor for multi-unit online auctions with divisible and indivisible items, and a mechanism that achieves a constant approximation to the optimal LW for general valuations over divisible items. However, prior to our work, there was no work on approximating the LW in CAs (in fact, that was one of the open problems in \cite{DPL14}). 

Our work is remotely related to the literature of \emph{budget feasible mechanism design}, a topic introduced by \citet{S10} and studied in e.g., \cite{DPS11,CGL11,BCGL12,BH16,WLL18}. Budget feasible mechanism design focuses on payment optimization in reverse auctions, a setting almost orthogonal to the setting we consider in this work.

\section{Notation and Preliminaries}\label{sec:model} 

The problem and most of the terminology and the notation are introduced in Section~\ref{sec:intro}. In this section, we introduce some additional notation required for the technical part. 

We use $\E[X]$ to denote the expectation of a random variable $X$ and $\Pr[E]$ to denote the probability of an event $E$. Let $\OPT$ (resp. $\bOPT$) denote the optimal SW (resp. LW)\footnote{The instance is clear from the context.}. For some $\rho > 1$, which may depend on $n$ and $m$, we say that a mechanism is $\rho$-approximate for the optimal SW (resp. LW) if it produces an allocation $\calS$ with $\rho \cdot v(\calS) \geq \OPT$ (resp. $\rho \cdot \bv(\calS) \geq \bOPT$).

Let a social choice function $f: \bV^n \to A$, which maps the set of liquid valuation functions of the bidders, $\bV: V \times B$, to an allocation, $A$, and a payment scheme $q = (q_1, \dots, q_n)$ for this allocation. A deterministic mechanism is defined by the pair $(f,q)$. Our mechanisms in this work are going to be \emph{randomized}, i.e., they are probability distributions over \emph{deterministic} mechanisms. The incentives desiderata for randomized mechanisms are usually either \emph{universal truthfulness} (when for all the deterministic mechanisms, the bidders' dominant strategy is truthfulness) or \emph{truthfulness in expectation} \citep{DFK10,DRY11} (when bidders' \emph{expected} utilities are maximized under truthful reporting of their private information). In this work, we are focusing on the former, stronger notion; the one of \emph{universal truthfulness}, under the bidders' budget constraints.

\begin{definition}[Universal Truthfulness under Budget Constraints] \label{def:truth-budgets}
Let $(\tilde{f}, \tilde{q})$ be a randomized mechanism over a set of deterministic mechanisms $\{(f^1, q^1 ), \dots, (f^k, q^k ) \}$. Mechanism $(\tilde{f}, \tilde{q})$ is \emph{universally truthful} if for all $i \in [n], \kappa \in [k]$ and for any $v_i'$ and any $B_i'$, such that $q^\kappa(v_i', v_{-i}) \leq B_i'$ and $q^\kappa(v_i, v_{-i}) \leq B_i$, it holds that:
\begin{dmath*}
v_i(f^\kappa(v_i,B_i, v_{-i},B_{-i})) - q^\kappa(v_i,B_i, v_{-i}, B_{-i}) \geq v_i(f^\kappa(v_i', B_i', v_{-i},B_{-i})) - q^\kappa(v_i', B_i', v_{-i}, B_{-i})
\end{dmath*}
\end{definition}

\section{Approach}\label{sec:approach}

First, we show that if the bidder valuations are submodular (resp. XOS, subadditive), then their liquid valuations are submodular (resp. XOS, subadditive) as well.

\begin{lemma}\label{lem:val-liqval}
Let $v$ be a non-negative monotone submodular (resp. XOS, subadditive) function. Then, for any $B \in \reals_{\geq 0}$, $\bv = \min\{ v, B \}$ is also monotone submodular (resp. XOS, subadditive).
\end{lemma}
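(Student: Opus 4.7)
Monotonicity of $\bar v = \min\{v, B\}$ is immediate from monotonicity of $v$. The plan is to handle the three structural properties separately: submodular and subadditive by short case analyses that use monotonicity of $v$ to order the relevant values, and XOS by an explicit construction of additive minorants. The XOS case is the main obstacle, since it requires producing a concrete XOS representation of $\bar v$ rather than merely verifying an inequality.

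For the submodular case, monotonicity gives $v(S\cap T) \leq v(S), v(T) \leq v(S\cup T)$; assuming WLOG $v(S) \leq v(T)$, set $a = v(S\cap T) \leq b = v(S) \leq c = v(T) \leq d = v(S\cup T)$, so submodularity of $v$ reads $b + c \geq a + d$. The goal $\min\{b, B\} + \min\{c, B\} \geq \min\{a, B\} + \min\{d, B\}$ splits on which of the five intervals determined by $a \leq b \leq c \leq d$ contains $B$; each subcase reduces either to the submodular inequality for $v$, to monotonicity (e.g.\ $b \geq a$), or to a trivial identity (e.g.\ $2B = 2B$ when $B \leq a$). The subadditive case is analogous but shorter: if $v(S\cup T) \leq B$ then monotonicity forces $v(S), v(T) \leq B$, so subadditivity of $\bar v$ coincides with that of $v$; otherwise $\bar v(S\cup T) = B$, and either one of $\bar v(S), \bar v(T)$ already equals $B$, or else $\bar v(S) + \bar v(T) = v(S) + v(T) \geq v(S\cup T) > B$.

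For the XOS case, the first step is the identity $\min\{\max_k a_k, B\} = \max_k \min\{a_k, B\}$ (distributivity of $\min$ with a constant over $\max$), which reduces the problem to showing that $\bar w = \min\{w, B\}$ is XOS whenever $w$ is additive with weights $c_j := w(\{j\})$; taking the union over $k$ of the additive minorants obtained for each $w_k$ in the representation $v = \max_k w_k$ then yields an XOS representation of $\bar v$. For each $S \subseteq U$, I would explicitly build an additive function $w_S$ satisfying $w_S(T) \leq \bar w(T)$ for all $T$, with equality at $T = S$. When $w(S) \leq B$, take $w_S(\{j\}) = c_j$ for $j \in S$ and $0$ otherwise. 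When $w(S) > B$, order $S$ arbitrarily, let $S' \subseteq S$ be the maximal prefix with $w(S') \leq B$ and $j^* \in S \setminus S'$ the first omitted element, and set $w_S(\{j\}) = c_j$ for $j \in S'$, $w_S(\{j^*\}) = B - w(S')$, and $0$ on the remaining items. A direct calculation shows $w_S(T) \leq w(T)$ and $w_S(T) \leq w_S(U) \leq B$ for every $T$, hence $w_S(T) \leq \bar w(T)$, while by construction $w_S(S) = \bar w(S)$; so $\bar w = \max_S w_S$ is XOS, completing the argument.
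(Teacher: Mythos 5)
Your proof is correct and matches the paper's in all essentials: the same style of case analysis for submodularity (the paper uses the equivalent marginal-value formulation and splits on where $B$ falls, just as you do) and for subadditivity, and the same greedy ``budget-capped prefix'' construction of additive clauses for the XOS case. Your reduction via the identity $\min\{\max_k w_k(S), B\} = \max_k \min\{w_k(S), B\}$ is a slightly cleaner way to package the paper's argument, which instead directly defines one capped clause per original clause and per permutation of the items, but the underlying construction and verification are identical.
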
         

In Algorithm~\ref{algo:pp}, we present a universally truthful (since the prices offered to each bidder do not depend on her declaration and demand queries maximize bidders' utility) mechanism, which is a simplified version of the mechanism in \citep{KV12} for approximating SW in CAs. Since for the LW, bidders have budgets, we replace the demand queries $\DQ(v_i, U_i, \vec{p}^{(i)})$ in line~4 with budget constrained demand queries $\BCDQ(v_i, U_i, \vec{p}^{(i)}, B_i)$\footnote{In all our mechanisms, if budgets are larger than the valuations of the allocated bundles, the mechanism with $\BCDQ$ behaves identically to the mechanism with $\DQ$ (i.e., revenue and SW are not affected by the change in the objective.)}. As explained in Section~1.2, Algorithm~\ref{algo:pp} with $\BCDQ$s remains universally truthful for budget-constrained bidders.  

\begin{algorithm}[t]
\caption{Core Mechanism} \label{algo:pp}
\begin{algorithmic}[1]
\State Fix an ordering $\pi$ of bidders and set $U_1 = U$. 
\State Set initial prices for the items: $\vp^{(1)} = (p_1^{(1)}, \dots, p_m^{(1)})$. 
\For{each bidder $i = 1, \dots, n$ according to $\pi$}
\State Let $S_i = \DQ(v_i, U_i, \vp^{(i)},)$ 
\State With probability $q$, allocate $S_i$ to $i$ and set $U_{i+1} = U_i \setminus S_i$\,. Otherwise, set $U_{i+1} = U_i$\,.
\State Update item prices to $\vp^{(i+1)} = ( p_1^{(i+1)}, \dots, p_m^{(i+1)})$. 
\EndFor
\end{algorithmic}
\end{algorithm}

\begin{lemma}[Truthfulness of BCDQs]
For budget-constrained bidders, Algorithm~\ref{algo:pp} with $\BCDQ$s in line~4, is universally truthful. 
\end{lemma}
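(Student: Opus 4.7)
The plan is to show, for any fixed realization of the random coins determining a deterministic mechanism $(f^\kappa, q^\kappa)$ in the support of the randomized mechanism, that truthful reporting maximizes each bidder's utility among the feasible deviations allowed by Definition~\ref{def:truth-budgets}. The argument rests on two observations: bidder $i$'s report cannot influence the prices $\vec{p}^{(i)}$ or the residual item set $U_i$ she faces, and $\BCDQ$ is, by construction, the oracle that returns the utility-maximizing feasible bundle.

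First I would fix an index $\kappa$ and a bidder $i$, and argue by induction over the ordering $\pi$ that both $\vec{p}^{(i)}$ and $U_i$ are measurable with respect to the history $(v_1, B_1, \dots, v_{i-1}, B_{i-1})$ together with the (now fixed) coin flips in $\kappa$. Indeed, $U_1 = U$ and $\vec{p}^{(1)}$ are independent of all reports, and lines 5 and 6 update $U_{i+1}$ and $\vec{p}^{(i+1)}$ from quantities computed strictly before bidder $i$'s turn. Hence bidder $i$'s own report $(v_i', B_i')$ cannot change $\vec{p}^{(i)}$ or $U_i$.

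Next I would condition on whether the coin flip at line 5 assigns $S_i$ to bidder $i$. If it does not, she receives the empty bundle and pays zero regardless of her report, so both sides of the truthfulness inequality vanish. If it does, then under truthful reporting the allocated bundle is, by definition of $\BCDQ(v_i, U_i, \vec{p}^{(i)}, B_i)$,
\[
S_i \;=\; \arg\max_{S \subseteq U_i}\bigl\{\, v_i(S) - p^{(i)}(S) \;:\; p^{(i)}(S) \leq B_i \,\bigr\},
\]
so her truthful utility equals $v_i(S_i) - p^{(i)}(S_i)$. Under a deviation $(v_i', B_i')$, she is allocated some bundle $S_i'$ with $p^{(i)}(S_i') \leq B_i'$; Definition~\ref{def:truth-budgets} further restricts us to deviations with $q^\kappa(v_i', v_{-i}) = p^{(i)}(S_i') \leq B_i$, so $S_i'$ is also feasible in the truthful $\BCDQ$. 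Optimality of $S_i$ in that query yields $v_i(S_i) - p^{(i)}(S_i) \geq v_i(S_i') - p^{(i)}(S_i')$, which is precisely the inequality required.

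The only delicate point, and the one I would make sure to spell out, is the two-sided budget feasibility in Definition~\ref{def:truth-budgets}: without requiring that the deviation's payment also respects the true budget $B_i$, a bidder could overreport $B_i$ to obtain a more valuable unaffordable bundle, and $S_i'$ would no longer be a feasible candidate in the truthful $\BCDQ$. The definition of universal truthfulness under budget constraints rules out exactly such deviations, so the $\BCDQ$ optimality argument goes through. Everything else is a direct unpacking of the definitions, and since the argument holds for every $\kappa$ and every bidder $i$, the randomized mechanism is universally truthful under budget constraints.
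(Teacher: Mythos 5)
Your proposal is correct and follows exactly the argument the paper intends: the paper's entire proof is the remark that the lemma ``follows directly from Definition~\ref{def:truth-budgets},'' justified earlier by the two observations you expand upon --- that the prices and residual item set faced by bidder $i$ are fixed before her report, and that $\BCDQ$ returns her utility-maximizing budget-feasible bundle. Your explicit treatment of the budget-feasibility restriction on deviations (reading the condition as bounding the deviating payment by the \emph{true} budget $B_i$, which is what makes $S_i'$ a feasible candidate in the truthful query) is a detail the paper glosses over but is needed for the optimality comparison to go through.
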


The lemma follows directly from Definition~\ref{def:truth-budgets}. Nevertheless, universal truthfulness is not our sole desideratum; in each of the three settings analyzed in the following sections, we show why mechanisms similar in spirit to Algorithm~\ref{algo:pp} with $\BCDQ$s, yield good approximation guarantees for the LW. Before the setting-specific analyses, we relate the efficiency of $\BCDQ$ to the efficiency of standard $\DQ$s for liquid valutions. 

\begin{lemma}\label{lem:bfdo}
Let $S \subseteq U$ be the set allocated by the BCDQ for a bidder with valuation $v$ and budget $B$. Then, for every other $T \subseteq U$, the following hold:
\begin{enumerate}[label=\roman*)]
\item \label{lem:prop1} $\bv(S) \geq \bv(T) - p(T)$
\item \label{lem:prop2} $2\bv(S) - p(S) \geq \bv(T) - p(T)$.
\end{enumerate}
\end{lemma}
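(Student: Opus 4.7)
The plan is to use the defining property of the BCDQ, namely that $S$ maximizes $v(S') - p(S')$ over all $S' \subseteq U$ satisfying the budget constraint $p(S') \leq B$. Since $\emptyset$ is always feasible with utility $0$, this also gives $v(S) \geq p(S)$, hence $2\bar{v}(S) - p(S) \geq 0$ (by splitting on whether $\bar{v}(S) = v(S)$ or $\bar{v}(S) = B \geq p(S)$) and $\bar{v}(S) \geq 0$. I would then case-split on whether the comparison set $T$ is feasible for the BCDQ, i.e., on the sign of $B - p(T)$.

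In the trivial case $p(T) > B$, both inequalities are immediate: $\bar{v}(T) \leq B < p(T)$ so $\bar{v}(T) - p(T) < 0$, which is dominated by $\bar{v}(S) \geq 0$ for (i) and by $2\bar{v}(S) - p(S) \geq 0$ for (ii). The substantive case is $p(T) \leq B$, where the BCDQ optimality yields the key inequality $v(S) - p(S) \geq v(T) - p(T)$; in particular, $v(S) \geq v(T) - p(T)$. For (i), I would combine this with the obvious bound $v(S) \geq v(T) - p(T)$ and $B \geq B - p(T) \geq \bar{v}(T) - p(T)$ to conclude $\bar{v}(S) = \min\{v(S), B\} \geq \min\{v(T), B\} - p(T) = \bar{v}(T) - p(T)$. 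For (ii), I would sub-case on whether the budget is binding at $S$: if $v(S) \leq B$ then $\bar{v}(S) = v(S)$ and $2\bar{v}(S) - p(S) \geq v(S) - p(S) \geq v(T) - p(T) \geq \bar{v}(T) - p(T)$; if $v(S) > B$ then $\bar{v}(S) = B$, so $2\bar{v}(S) - p(S) = 2B - p(S) \geq B \geq \bar{v}(T) \geq \bar{v}(T) - p(T)$.

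The only mildly delicate point is the factor-$2$ slack in (ii): this is precisely what lets us compare an allocation $S$ at which the budget binds (so $\bar{v}(S) = B$ can be much smaller than $v(S)$) with a set $T$ whose liquid value is essentially $B$; the ``extra'' copy of $\bar{v}(S)$ absorbs the gap between $v(S)$ and $\bar{v}(S)$ created by the budget cap. No subadditivity or XOS structure of $v$ is needed for the proof; monotonicity and non-negativity suffice, which is exactly why the lemma can be invoked uniformly across the submodular, XOS, and subadditive settings later in the paper.
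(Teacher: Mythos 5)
Your proof is correct and follows essentially the same route as the paper's: handle $p(T) > B$ trivially, and for $p(T) \leq B$ use the BCDQ optimality $v(S) - p(S) \geq v(T) - p(T)$ together with affordability $p(S) \leq B$ and a case split on which branch of the $\min$ is active. The only (cosmetic) difference is that the paper derives part (ii) by simply adding the affordability inequality $\bv(S) \geq p(S)$ to part (i), whereas you re-run a short case analysis; both are valid.
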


\begin{proof} We will prove each claim of the lemma separately. For claim~\ref{lem:prop1}, notice that if $p(T) > B$, then the right hand side of the inequality will be negative and thus, the inequality trivially holds. So, we will focus on the case where $p(T) \leq B$. We distinguish the following cases: 
\begin{enumerate}
\item ($\bv(S) = v(S)$ and $\bv(T) = v(T)$) Hence, $B \geq v(T)$. Bundle $T$ was \emph{considered} at the time of the query and yet, the query returned set $S$. Thus: $\bv(S) \geq \bv(S) - p(S) = v(S) - p(S) \geq v(T) - p(T) = \bv(T) - p(T)$.  
\item ($\bv(S) = B$ and $\bv(T) = B$) Then, the inequality trivially holds since: $B \geq B - p(T)$ and prices are non-negative.
\item ($\bv(S) = B$ and $\bv(T) = v(T)$) The inequality holds since: $B \geq B - p(T) \geq v(T) - p(T) = \bv(T) - p(T)$.
\item ($\bv(S) = v(S)$ and $\bv(T) = B$) Hence, $B \leq v(T)$. Bundle $T$ was \emph{considered} at the time of the query and yet, the query returned set $S$. Thus, $\bv(S) \geq \bv(S) - p(S)   = v(S) - p(S) \geq v(T) - p(T) \geq B - p(T) = \bv(T) - p(T)$.
\end{enumerate}
This concludes our proof for claim~\ref{lem:prop1}.

For claim~\ref{lem:prop2}, notice that since $S$ is the set received from the BCDQ, then it is \emph{affordable}: $\bv(S) \geq p(S)$. Adding this inequality to the inequality of claim \ref{lem:prop1}, we have that: $2\bv(S) - p(S) \geq \bv(T) - p(T)$. \end{proof}

\section{Worst-Case Setting}\label{sec:worst-case}

For the worst-case instances, adapting appropriately our Core Mechanism, we present Algorithm~\ref{algo:kv12} (based again, on the mechanism of \citep{KV12}). The only difference is that budget-constrained bidders in Algorithm~\ref{algo:kv12} are restricted to using $\BCDQ$s, instead of $\DQ$s, thus making the mechanism universally truthful (see Section~\ref{sec:approach}). Resembling the analysis of \cite{KV12}, we show that for $1/q = \Theta(\log m)$, Algorithm~\ref{algo:kv12} achieves an approximation ratio of $O(\log m)$ for the LW. First, we note that parameter\footnote{$L$ can be computed with standard techniques, as explained in \citep{KV12}.} $L$ is selected so that there exists only one bidder whose liquid valuation for $U$ (weakly) exceeds it.  
\begin{algorithm}[t]
\caption{KV-Mechanism for Liquid Welfare} \label{algo:kv12}
\begin{algorithmic}[1]
\State Fix an ordering $\pi$ of bidders and set $U_1 = U$. 
\State Set initial prices $p_1^{(1)} = \cdots = p_m^{(1)} = \frac{L}{4m}$. 
\For{each bidder $i = 1, \dots, n$ according to $\pi$}
\State Let $S_i = \BCDQ(v_i, U_i, \vp^{(i)}, B_i)$ 
\State With probability $q$, allocate $R_i = S_i$ to $i$ and set $U_{i+1} = U_i \setminus S_i$\,. Otherwise, set $U_{i+1} = U_i, R_i = \emptyset$\,.
\State Update prices $\forall j \in S_i$: $p_j^{(i+1)} =2 p_j^{(i)}$. 
\EndFor
\end{algorithmic}
\end{algorithm}

\begin{theorem}\label{thm:lw-wc}
Algorithm~\ref{algo:kv12} is universally truthful and for $q = 1/\Theta(\log m)$, achieves an approximation ratio of $O(\log m)$ for the LW. 
\end{theorem}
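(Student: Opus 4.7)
I would prove universal truthfulness directly from the mechanism's structure: the price vector $\vp^{(i)}$ and the set $U_i$ of still-available items depend only on the internal coins and on the declarations of bidders $1,\dots,i-1$, never on bidder $i$ herself. Given these, the $\BCDQ$ returns the utility-maximising affordable bundle, so reporting $(v_i,B_i)$ truthfully is a dominant strategy in every deterministic realisation, matching Definition~\ref{def:truth-budgets}.

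For the approximation ratio I would adapt the analysis of \citet{KV12}, working throughout with the liquid valuations $\bv_i$, which remain XOS by Lemma~\ref{lem:val-liqval}. Fix an optimal partition $\calO=(O_1,\ldots,O_n)$ for $\bOPT$ and, for each $i$, an XOS supporting clause $w_i$ with $\bv_i(O_i)=\sum_{j\in O_i}w_i(\{j\})$. Let $i^\star$ be the unique bidder with $\bv_{i^\star}(U)\ge L$; a separate constant-probability branch that hands $U$ to $i^\star$ contributes $\Omega(L)$ to $\E[\bv(\calS)]$, and by the choice of $L$ we have $\bOPT\le L+(n-1)L$, so it suffices to $O(\log m)$-approximate the non-tall portion $\sum_{i\ne i^\star}\bv_i(O_i)$. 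A key structural observation is a uniform price cap: each item's price starts at $L/(4m)$ and doubles only when the item is handed out, so it exceeds $L$ after at most $\log_2(4m)+1=O(\log m)$ doublings. Once $p_j^{(i)}>L$, monotonicity of $\bv_i$ yields $\bv_i(\{j\})\le\bv_i(U)<L<p_j^{(i)}$ for every non-tall $i$, so including $j$ in any candidate bundle strictly decreases the $\BCDQ$ objective and $j$ never reappears in an output; thus every price stays in $[L/(4m),\,2L]$.

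For each non-tall bidder $i$, I would apply Lemma~\ref{lem:bfdo}(ii) with $T=O_i\cap U_i$ to obtain
\[
2\,\bv_i(S_i)-p^{(i)}(S_i)\;\ge\;\bv_i(O_i\cap U_i)-p^{(i)}(O_i\cap U_i),
\]
then use XOS supportiveness, $\bv_i(O_i\cap U_i)\ge\bv_i(O_i)-\sum_{j\in O_i\setminus U_i}w_i(\{j\})$, noting that each $j\in O_i\setminus U_i$ has already been allocated and hence had its price at least doubled. Summing in expectation over $i$, I would run the standard KV telescope: an item moves out of $U$ only on acceptance (probability $q$ per visit), so each geometrically-increasing price pays for both $\E[p^{(i)}(O_i\cap U_i)]$ and the lost supporting value $\E[\sum_{j\in O_i\setminus U_i}w_i(\{j\})]$ at a rate of $1/q$. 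Combined with the $O(\log m)$ doubling bound, choosing $q=1/\Theta(\log m)$ balances acceptance against price growth and yields $\E[\bv(\calS)]=\Omega(\bOPT/\log m)$.

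The main obstacle will be this telescoping calibration: the factor-$2$ slack in Lemma~\ref{lem:bfdo}(ii) loosens the bookkeeping of \citet{KV12} by an absolute constant, and I must verify that the retuned $q$ still absorbs both the charged prices and the lost supporting values into $\E[\sum_i p^{(i)}(S_i)]$ within the claimed $O(\log m)$ factor. The apparent worry that the $\BCDQ$'s budget constraint could force a bidder to drop items of $O_i\cap U_i$ whose prices sum above $B_i$ is completely averted by Lemma~\ref{lem:bfdo}, whose conclusions hold with \emph{no} assumption on $p(T)$: the XOS decomposition of $\bv_i(O_i)$ therefore charges cleanly against the algorithm's liquid welfare, even when $O_i$ itself is unaffordable under the posted prices.
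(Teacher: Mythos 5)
Your truthfulness argument and your overall plan (adapt \citet{KV12} to liquid valuations via Lemma~\ref{lem:val-liqval} and Lemma~\ref{lem:bfdo}, with the $O(\log m)$ price-doubling cap and $q=1/\Theta(\log m)$) match the paper. But one step, as you describe it, would fail: you propose to pay for the lost supporting value $\E[\sum_{j\in O_i\setminus U_i}w_i(\{j\})]$ out of the geometrically increasing prices ``at a rate of $1/q$''. The supporting weights $w_i(\{j\})$ of the optimal allocation bear no relation to the algorithm's prices $p_j$ --- an item can be sold early at price $L/(4m)$ yet carry almost all of $\bv_i(O_i)$ --- so no charging scheme against revenue can control this term. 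The correct mechanism (and the paper's) is probabilistic: each item is demanded at most $\log(4m)+2$ times before its price exceeds the threshold at which no non-tall bidder wants it, each demand removes it with probability $q$, hence $\Pr[j\notin U_i]\le q(\log(4m)+2)\le 1/2$, and by fractional subadditivity of $\bv_i$ this yields $\E[\bv_i(O_i\cap U_i)]\ge\bv_i(O_i)/2$ (the paper imports this as Lemma~6 of \citet{KV12}). You state both ingredients --- the $O(\log m)$ bound on demands per item and the probability $q$ per removal --- but combine them through the wrong mechanism. Relatedly, ``$\bOPT\le L+(n-1)L$'' is vacuous; what the analysis needs from the choice of $L$ is $L\le\bOPT$, so that the additive $L/4$ loss in the revenue bound is absorbed.

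On structure, the paper does not telescope. It proves two complementary bounds --- Lemma~\ref{lem:lb-1}, $\bv(\calS)\ge\sum_{j}p_j-L/4$ (affordability of every purchased bundle plus the geometric sum of prices), and Lemma~\ref{lem:lb-2}, $\bv(\calS)\ge\bOPT-\sum_j p_j$ (part~\ref{lem:prop1} of Lemma~\ref{lem:bfdo} applied to $T=O_i\cap U_i$) --- and adds them so that $\sum_j p_j$ cancels, first for the overselling variant and then in expectation. Note also that only part~\ref{lem:prop1} of Lemma~\ref{lem:bfdo}, namely $\bv(S)\ge\bv(T)-p(T)$, is needed; the factor-$2$ slack in part~\ref{lem:prop2} that you worry about never enters the bookkeeping, since affordability $\bv(S)\ge p(S)$ lets you drop the price of $S$ from the left-hand side entirely.
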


We present a series of Lemmas that will lead us naturally to the proof of the Theorem. Let $\calS = (S_1, \dots, S_n)$ and $\calR = (R_1, \dots, R_n)$ the provisional and the final allocation of Algorithm~\ref{algo:kv12} respectively. First, we provide two useful bounds on $\bv(\calS)$. We find it important to also discuss an overselling variant of Algorithm~\ref{algo:kv12}. In the \emph{Overselling} variant, allow us to assume that for Step~5 of Algorithm~\ref{algo:kv12}, $q=1$ (i.e., $S_i$ is allocated to bidder $i$ with certainty) and $U_{i+1} = U_i = U$ (thus the name of the variant). The \emph{Overselling} variant allocates at most $k=\log(4m)+2$ copies of each item and collects a liquid welfare within a constant factor of the optimal LW. To see that, observe that for $q=1$, after allocating $k-1$ copies of some item $j$, $j$'s price becomes $\frac{L}{4m} 2^{\log(4m)+1} = 2L$. Then, there is only one agent with liquid valuation larger than $L$ who can get a copy of $j$. 

\begin{lemma}\label{lem:lb-1}
Let $p_j$ denote the final price of each item $j$. Then, for \emph{any} sets $U_1, \dots, U_n \subseteq U$ of items available when the bidders arrive, Algorithm~\ref{algo:kv12} with $q = 1$ satisfies $\bv(\calS) \geq \sum_{j \in U} p_j - \frac{L}{4}$.
\end{lemma}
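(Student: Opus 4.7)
The plan is to prove this by a potential/accounting argument on the sum of item prices. Observe that the quantity $\sum_{j \in U} p_j$ increases over the course of the algorithm: each item starts at price $L/(4m)$ (contributing $L/4$ to the total in aggregate), and whenever bidder $i$ is allocated $S_i$, each price $p_j$ for $j \in S_i$ doubles. Crucially, the amount by which $\sum_j p_j$ increases at step $i$ is exactly $\sum_{j \in S_i} p_j^{(i)} = p^{(i)}(S_i)$, the price paid by bidder $i$ under the current price vector. Therefore
\begin{equation*}
\sum_{j \in U} p_j \;=\; \frac{L}{4} \,+\, \sum_{i=1}^{n} p^{(i)}(S_i).
\end{equation*}

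The next step is to bound each term $p^{(i)}(S_i)$ by $\bv_i(S_i)$. Since $S_i$ is returned by $\BCDQ(v_i,U_i,\vec{p}^{(i)},B_i)$, the empty set is a feasible alternative, so $v_i(S_i) - p^{(i)}(S_i) \geq v_i(\emptyset) - p^{(i)}(\emptyset) = 0$, giving $p^{(i)}(S_i) \le v_i(S_i)$. Moreover, the BCDQ respects the budget, so $p^{(i)}(S_i) \le B_i$. Combining these two inequalities yields $p^{(i)}(S_i) \le \min\{v_i(S_i), B_i\} = \bv_i(S_i)$ (this is exactly the affordability observation used in the proof of Lemma~\ref{lem:bfdo}).

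Summing over $i$ and substituting into the identity above,
\begin{equation*}
\sum_{j \in U} p_j \;=\; \frac{L}{4} + \sum_{i=1}^{n} p^{(i)}(S_i) \;\le\; \frac{L}{4} + \sum_{i=1}^{n} \bv_i(S_i) \;=\; \frac{L}{4} + \bv(\calS),
\end{equation*}
and rearranging gives $\bv(\calS) \geq \sum_{j \in U} p_j - L/4$, as required. I expect no serious obstacle: the argument is essentially bookkeeping of the geometric price updates combined with the elementary affordability property of BCDQ. Note that the statement is universal over the choice of availability sets $U_1,\dots,U_n$, but this is automatic since both the accounting identity and the affordability bound hold step by step regardless of which items happen to be available to bidder $i$; this generality is what later allows us to apply the bound to the random availability sets that arise when $q<1$.
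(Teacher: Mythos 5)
Your proof is correct and follows essentially the same route as the paper's: the paper derives the identity $\sum_{i}\sum_{j\in S_i}p_j^{(i)} = \sum_{j\in U}p_j - \tfrac{L}{4}$ by explicitly summing the geometric series $\frac{L}{4m}\sum_j(2^{\ell_j}-1)$, which is exactly your step-by-step potential accounting, and it uses the same affordability bound $\bv_i(S_i)\geq p^{(i)}(S_i)$ from individual rationality and the budget constraint. No issues.
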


\begin{proof}
Since bidders are individually rational and do not violate their budget constraints, for every bidder $i$ it holds that $B_i \geq \sum_{j \in S_i} p_j^{(i)}$ and $v_i(S_i) \geq \sum_{j \in S_i} p_j^{(i)}$. The rest of the proof is identical to the proof of \cite[Lemma~2]{KV12} for $b=1$. Specifically, let $\ell_j^{(i)}$ be the number of copies of item $j$ allocated just before bidder $i$ arrives, and let $\ell_j$ be the total number of copies of item $j$ allocated by Algorithm~\ref{algo:kv12} with $q = 1$. Then, using the fact that $p_j =L \cdot \frac{2^{\ell_j}}{4m}$: 
\begin{align*}
\bv(\calS) & \geq \sum_{i=1}^n \sum_{j \in S_i} p_j^{(i)} = \frac{L}{4m} \sum_{i=1}^n \sum_{j \in S_i} 2^{\ell_j^{(i)}} = \frac{L}{4m} \sum_{j \in U} (2^{\ell_j} - 1) = \sum_{j \in U} p_j - \frac{L}{4}
\end{align*}
\end{proof}

\begin{lemma}\label{lem:lb-2}
For sets $U_1 = \dots = U_n \subseteq U$, the \emph{Overselling} variant of Algorithm~\ref{algo:kv12} with $q = 1$ satisfies $\bv(\calS) \geq \bOPT - \sum_{j \in U} p_j$.
\end{lemma}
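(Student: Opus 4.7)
The plan is to compare the Overselling allocation $\calS$ to an optimal liquid-welfare allocation $\calO = (O_1, \ldots, O_n)$ on a per-bidder basis, using Lemma~\ref{lem:bfdo}\ref{lem:prop1}, and then to exploit monotonicity of prices together with the fact that $\calO$ is a partition.

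First, I would fix an optimal LW allocation $\calO=(O_1,\dots,O_n)$, so $\bOPT = \sum_{i=1}^n \bv_i(O_i)$. In the Overselling variant we have $U_i = U$ for every bidder $i$, hence $O_i \subseteq U_i$, and $S_i = \BCDQ(v_i, U, \vec{p}^{(i)}, B_i)$. Applying Lemma~\ref{lem:bfdo}\ref{lem:prop1} with the comparison bundle $T = O_i$ and prices $\vec{p}^{(i)}$ yields
\[
\bv_i(S_i) \;\geq\; \bv_i(O_i) - p^{(i)}(O_i).
\]
Summing over $i$ gives $\bv(\calS) \geq \bOPT - \sum_{i=1}^n p^{(i)}(O_i)$.

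Second, I would bound the total ``optimum price'' $\sum_{i} p^{(i)}(O_i)$ by $\sum_{j \in U} p_j$, where $p_j$ is the final price of item $j$. This uses two facts: (a)~prices in Algorithm~\ref{algo:kv12} are non-decreasing over time (they either stay put or double), so $p^{(i)}_j \leq p_j$ for every $i$ and $j$; and (b)~$\calO$ is a partition of (a subset of) $U$, so each item $j \in U$ lies in at most one $O_i$. Consequently
\[
\sum_{i=1}^n p^{(i)}(O_i) \;=\; \sum_{i=1}^n \sum_{j \in O_i} p^{(i)}_j \;\leq\; \sum_{i=1}^n \sum_{j \in O_i} p_j \;\leq\; \sum_{j \in U} p_j.
\]
Plugging this into the previous bound produces $\bv(\calS) \geq \bOPT - \sum_{j \in U} p_j$, as required.

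There is no real obstacle here beyond making sure the hypotheses of Lemma~\ref{lem:bfdo}\ref{lem:prop1} are met for every bidder: the Overselling assumption $U_i=U$ is precisely what guarantees that the comparison bundle $O_i$ was ``available'' at the moment of the query, so the inequality applies uniformly across all bidders without any conditioning on whether items were previously taken. The rest is the standard partition-plus-monotone-prices bookkeeping used in the analysis of \citet{KV12}, now carried out with respect to liquid valuations rather than valuations.
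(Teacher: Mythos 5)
Your proposal is correct and follows essentially the same route as the paper's proof: apply Lemma~\ref{lem:bfdo}\ref{lem:prop1} to each bidder with comparison bundle $O_i$ (available since $U_i = U$ in the Overselling variant), sum over bidders, and then use price monotonicity together with the fact that $\calO$ is a partition to bound $\sum_i p^{(i)}(O_i)$ by $\sum_{j\in U} p_j$. The only difference is cosmetic ordering of the two bounding steps.
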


\begin{proof}
Let $\calO = (O_1, \dots, O_n)$ be the optimal allocation. From Lemma~\ref{lem:bfdo}, we get that for each bidder $i$, $\bv(S_i) \geq \bv(O_i) - \sum_{j \in O_i} p_j^{(i)} \geq \bv(O_i) - \sum_{j \in O_i} p_j$, where we use that the final price of each item is the largest one. Summing over all bidders, we have that $\bv(\calS) \geq \bv(\calO) - \sum_{i =1}^n \sum_{j \in O_i} p_j \geq \bOPT - \sum_{j \in U}p_j$, where the  last inequality uses the fact that the optimal solution is feasible and thus, each item is allocated at most once in $\calO$. 
\end{proof}

\begin{lemma}\label{lem:oversell-apx}
The \emph{Overselling} variant of Algorithm~\ref{algo:kv12} with $q = 1$ allocates at most $\log(4m)+2$ copies of each item and computes an allocation $\calS$ with liquid welfare $\bv(\calS) \geq \frac{3}{8} \bOPT$.
\end{lemma}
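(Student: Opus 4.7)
The plan is to handle the two claims separately, using Lemmas~\ref{lem:lb-1} and~\ref{lem:lb-2} as the workhorses for the approximation guarantee and a direct price-escalation argument for the copies bound.

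For the copies bound, I would argue as follows. By the choice of $L$, there is exactly one \emph{big} bidder with $\bv_i(U) \geq L$; every other (\emph{small}) bidder $i$ satisfies $\bv_i(U) < L$. Any set $S$ returned by $\BCDQ(v_i, U, \vp^{(i)}, B_i)$ yields non-negative utility (since $\emptyset$ is always feasible and gives utility $0$), so $v_i(S) \geq p(S)$, and $\BCDQ$'s budget constraint gives $p(S) \leq B_i$; combining, $p(S) \leq \min\{v_i(S), B_i\} = \bv_i(S) \leq \bv_i(U)$. Hence, as soon as the price of a single item $j$ reaches $L$, any $S \ni j$ would satisfy $p(S) \geq L > \bv_i(U)$ for every small bidder, contradicting $p(S) \leq \bv_i(U)$; so no small bidder can be allocated $j$ once $p_j \geq L$. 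Starting from $L/(4m)$ and doubling per allocation, the price of $j$ reaches $L$ after at most $\log(4m)$ allocations; after that only the unique big bidder can still receive $j$, and she arrives exactly once. Adding up, item $j$ is allocated at most $\log(4m)+2$ times.

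For the approximation guarantee, I would simply combine the two inequalities obtained in Lemmas~\ref{lem:lb-1} and~\ref{lem:lb-2}. Since in the Overselling variant we have $U_1 = \cdots = U_n = U$ and $q=1$, both lemmas apply and give
\[
\bv(\calS) \;\geq\; \sum_{j \in U} p_j - \tfrac{L}{4}
\qquad\text{and}\qquad
\bv(\calS) \;\geq\; \bOPT - \sum_{j \in U} p_j .
\]
Adding these, the prices cancel and we obtain $2\bv(\calS) \geq \bOPT - L/4$, i.e.\ $\bv(\calS) \geq \bOPT/2 - L/8$. To close the argument I would observe that $\bOPT \geq L$: allocating all of $U$ to the unique big bidder is a feasible solution of liquid welfare $\bv_i(U) \geq L$. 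Substituting yields $\bv(\calS) \geq \bOPT/2 - \bOPT/8 = (3/8)\,\bOPT$, as claimed.

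The main obstacle I expect is in the copies bound, specifically in pinning down the constant. The price-escalation argument above actually yields $\log(4m) + 1$ allocations outright; the extra $+1$ in the statement seems to be a slack factor that accommodates a boundary case (for instance, one allocation by the big bidder at price $L$ before prices exceed her affordability range). The rest of the argument is essentially mechanical given Lemmas~\ref{lem:lb-1} and~\ref{lem:lb-2} and the straightforward bound $\bOPT \geq L$.
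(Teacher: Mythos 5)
Your proposal is correct and follows essentially the same route as the paper: the welfare bound is obtained exactly as the paper intends, by adding the inequalities of Lemmas~\ref{lem:lb-1} and~\ref{lem:lb-2} and invoking $\bOPT \geq L$, and the copies bound is the paper's own price-doubling argument (the paper stops the count once the price reaches $2L$ rather than $L$, which is where its $\log(4m)+2$ comes from, so your $\log(4m)+1$ is just a slightly tighter version of the same reasoning).
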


\begin{proof}
Follows directly from Lemma~\ref{lem:lb-1}, Lemma~\ref{lem:lb-2} and the fact that $\bOPT \geq L$. 
\end{proof}

Of course, the allocation  $\calS$ in Lemma~\ref{lem:oversell-apx} is infeasible, since it allocates a logarithmic number of copies of each item. The remedy is to use an allocation probability $q = 1/\Theta(\log m)$. For such values of $q$, we can plug in the proof of \cite[Lemma~6]{KV12} (which just uses that the valuation functions are fractionally subadditive) and show that for each agent $i$ and for all $A \subseteq U$, $\E[\bv_i(A \cap U_i)] \geq \bv_i(A)/2$. We are now ready to conclude the proof of Theorem~\ref{thm:lw-wc}. 
\begin{lemma}
For Algorithm~\ref{algo:kv12} with $q^{-1} = 4(\log(4m) + 1)$, it holds that $\E[\bv (\calS)] \geq \bOPT/8$ and $\E[\bv(\calR)] \geq q\,\bOPT/8$. 
\end{lemma}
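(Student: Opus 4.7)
The plan is to mimic the proof of Lemma~\ref{lem:oversell-apx} for the Overselling variant, but now using the randomization in Step~5 to avoid overselling, and paying for it with an extra factor of $1/3$ (going from $3\bOPT/8$ to $\bOPT/8$). Concretely, I want to combine two lower bounds on $\E[\bv(\calS)]$: one obtained by individual rationality and the price-doubling rule (analogue of Lemma~\ref{lem:lb-1}), and a second one obtained by comparing $S_i$ with the optimal bundle $O_i$ restricted to the available set $U_i$, via Lemma~\ref{lem:bfdo} and the XOS-based claim $\E[\bv_i(O_i\cap U_i)]\ge \bv_i(O_i)/2$ borrowed from \cite[Lemma~6]{KV12}. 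The main obstacle is making sure each step goes through unchanged when replacing $\DQ$ by $\BCDQ$; this is precisely what Lemma~\ref{lem:bfdo}\ref{lem:prop1} was designed for, so the argument of \cite{KV12} should carry over verbatim.

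First, I would observe that Lemma~\ref{lem:lb-1} was proved only using that $\bv_i(S_i)\ge p^{(i)}(S_i)$ (individual rationality under BCDQ) and the doubling rule; both survive for arbitrary $q$. Hence, deterministically on every sample path,
\[
\bv(\calS)\;\ge\;\sum_{j\in U} p_j \;-\;\frac{L}{4},
\]
where $p_j$ denotes the (random) final price of item $j$ in Algorithm~\ref{algo:kv12}. Taking expectations, $\E[\bv(\calS)]\ge \E[\sum_{j\in U} p_j] - L/4$.

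Second, let $\calO=(O_1,\dots,O_n)$ be an optimal allocation. For each bidder $i$, Lemma~\ref{lem:bfdo}\ref{lem:prop1} applied to $T=O_i\cap U_i\subseteq U_i$ yields
\[
\bv_i(S_i)\;\ge\;\bv_i(O_i\cap U_i)\;-\;\sum_{j\in O_i\cap U_i} p_j^{(i)}\;\ge\;\bv_i(O_i\cap U_i)\;-\;\sum_{j\in O_i} p_j,
\]
since prices are non-decreasing. Taking expectations and invoking the XOS/fractionally-subadditive bound $\E[\bv_i(O_i\cap U_i)]\ge \bv_i(O_i)/2$, which holds for $q^{-1}=4(\log(4m)+1)$ exactly as in \cite[Lemma~6]{KV12} (the argument only relies on XOS and the probability that each item survives), and then summing over $i$ using disjointness of the $O_i$'s,
\[
\E[\bv(\calS)]\;\ge\;\frac{\bOPT}{2}\;-\;\E\!\left[\sum_{j\in U} p_j\right].
\]
Adding this to the first bound eliminates $\E[\sum_j p_j]$ and gives $2\,\E[\bv(\calS)]\ge \bOPT/2 - L/4 \ge \bOPT/4$, where the last step uses $L\le \bOPT$ (the same fact exploited in Lemma~\ref{lem:oversell-apx}). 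Therefore $\E[\bv(\calS)]\ge \bOPT/8$.

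Finally, for the actually-allocated bundles $\calR$, I would note that the Bernoulli coin determining whether $R_i=S_i$ or $R_i=\emptyset$ is \emph{independent} of everything that determines $S_i$ given the history up to bidder $i-1$, because only the $S_k$'s (not the coins) affect prices, and the coin for bidder $i$ is drawn after $S_i$ is computed. Conditioning on the history through bidder $i$, $\E[\bv_i(R_i)\mid \text{history}]=q\,\bv_i(S_i)$, and summing over $i$ yields $\E[\bv(\calR)]=q\,\E[\bv(\calS)]\ge q\,\bOPT/8$, which completes the proof.
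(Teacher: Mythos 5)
Your proposal is correct and follows essentially the same route as the paper's proof: it combines the expectation versions of Lemma~\ref{lem:lb-1} and Lemma~\ref{lem:lb-2} via Lemma~\ref{lem:bfdo}\ref{lem:prop1} and the bound $\E[\bv_i(O_i\cap U_i)]\ge \bv_i(O_i)/2$ from \cite[Lemma~6]{KV12}, then uses $L\le\bOPT$ and the independence of the allocation coins to pass from $\calS$ to $\calR$. You have merely written out the bookkeeping that the paper compresses into ``working with the expectations as in the proofs of Lemma~\ref{lem:lb-1}, Lemma~\ref{lem:lb-2} and Lemma~\ref{lem:oversell-apx}.''
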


\begin{proof}
Let $\calO = ( O_1, \ldots, O_n )$ be the optimal allocation. For each bidder $i$, Lemma~\ref{lem:bfdo} implies that the response $S_i$ of $\BCDQ$ satisfies $\bv_i(S_i) \geq \bv_i(O_i \cap U_i) - \sum_{j \in O_i \cap U_i}p_j^{(i)}$, for any $U_i$ resulted from the outcome of the random coin flips. Therefore, $\E [\bv_i(S_i) ] \geq \E [ \bv_i(O_i \cap U_i) ] - \E [\sum_{j \in O_i \cap U_i} p_j^{(i)}]$. By the choice of $q$, for any bidder $i$, $\E[ \bv_i ( O_i \cap U_i )] \geq  \bv_i(O_i)/2$. Then, working with the expectations as in the proofs of Lemma~\ref{lem:lb-1}, Lemma~\ref{lem:lb-2} and Lemma~\ref{lem:oversell-apx}, we can show that $\E[\bv (\calS)] \geq \bOPT/8$. Finally, one can use linearity of expectation and show that $\E[\bv(\calR)] = q \E[\bv(\calS)]$. The details are omitted, due to lack of space, and can be found in \cite[Lemma~4]{KV12}.
\end{proof}

\section{Competitive Market}\label{sec:lcma}

\citet{BCIMS05} were the first ones to define a \emph{budget dominance parameter} that corresponded to the ratio of the maximum budget of all the bidders to the value of the optimum SW in the context of multi-unit auctions with budget-constrained bidders. More recently, \citet{EFV17} and \citet{LX17} used similar notions of budget dominance\footnote{Namely, $\forall i \in [n]: B_i \leq \bOPT/(mc)$, with $c$ a large constant.} (termed \emph{large market assumptions}) as a means to achieve constant factor approximation to the LW in multi-unit auctions and auctions with divisible items respectively. However, for the case of \emph{indivisible} items, it is clear that the definition of a large market used in the previous cases, becomes almost void (see Appendix~\ref{appendix:lcma} for a discussion). Below, we first introduce our definition of Competitive Markets for indivisible goods and then, show how one can obtain a constant factor approximation of the optimal LW, when bidders have XOS liquid valuations.

\begin{definition}[$(\eps, \delta)$ - Competitive Market]\label{def:competitive}
Let $0 \leq \eps < 2$ and a constant $0 \leq \delta \leq 1/2$. A market is called $(\eps, \delta)$ - Competitive Market, if for any randomly removed set of bidders, $\bS$, with cardinality $n/2$, then for the remaining set of bidders, $\bT$, it holds that: 
\begin{equation}
\Pr\!\left[\bOPT_{\bT} \geq \left(1 - \frac{\eps}{2}\right) \cdot \bOPT \right] \geq 1 - \delta 
\end{equation}
where by $\bOPT_{\bT}$ we denote the optimal LW achieved by bidders in set $\bT$. 
\end{definition}

\begin{proposition}
In an $(\eps,\delta)$ - Competitive Market, let $\bS \subseteq [n]$ be randomly chosen s.t. $|\bS| = \frac{n}{2}$ and let $\bT = [n] \setminus \bS$. Then:
\begin{align*}
\Pr\!&\left[\left\{\bOPT_{\bT} \geq \left(1-\tfrac{\eps}{2}\right)\bOPT\right\}\cap \left\{\bOPT_{\bS} \geq \left(1- \tfrac{\eps}{2}\right)\bOPT\right\}\right] \\ 
&\geq 1 - 2\delta
\end{align*}
\end{proposition}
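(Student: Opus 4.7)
The plan is to derive this as a direct union-bound consequence of Definition~\ref{def:competitive}, once we observe that both $\bS$ and $\bT$ are marginally distributed as uniformly random subsets of $[n]$ of size $n/2$.

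First I would set up the two events explicitly: let $A = \{\bOPT_{\bT} \geq (1-\eps/2)\bOPT\}$ and $B = \{\bOPT_{\bS} \geq (1-\eps/2)\bOPT\}$. Since $\bS$ is drawn uniformly at random from the size-$n/2$ subsets of $[n]$ and $\bT = [n]\setminus \bS$, the complement map is a bijection on size-$n/2$ subsets, so $\bT$ is also uniformly distributed over size-$n/2$ subsets. Consequently, Definition~\ref{def:competitive} applies to \emph{either} side of the random partition: viewing $\bT$ as the ``remaining set after removing $\bS$'' gives $\Pr[A] \geq 1-\delta$, and symmetrically, viewing $\bS$ as the ``remaining set after removing $\bT$'' gives $\Pr[B] \geq 1-\delta$.

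Next I would invoke the union bound on the complements:
\begin{align*}
\Pr[A \cap B] = 1 - \Pr[A^c \cup B^c] \geq 1 - \Pr[A^c] - \Pr[B^c] \geq 1 - 2\delta.
\end{align*}

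The only conceptual subtlety — and the one potential pitfall — is justifying that Definition~\ref{def:competitive} can be applied symmetrically to $\bS$ and to $\bT$. The definition is phrased as ``remove a random set $\bS$ of size $n/2$ and look at the rest,'' but because $n$ is even and the uniform distribution on size-$n/2$ subsets is invariant under complementation, the roles of removed/remaining are exchangeable; I would state this observation explicitly so the reader sees why $B$ inherits the $1-\delta$ lower bound as well. Everything else is just the union bound, so no nontrivial obstacle remains.
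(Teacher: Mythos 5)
Your proof is correct and follows essentially the same route as the paper's: define the two events and apply the union bound to their complements. The only difference is that you explicitly justify why Definition~\ref{def:competitive} applies symmetrically to both $\bS$ and $\bT$ (via invariance of the uniform distribution under complementation), a point the paper's proof leaves implicit; this is a worthwhile clarification but not a different argument.
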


\begin{proof}
Let $X_{\bS}$ the event that $\bOPT_{\bS} \geq \left(1-\frac{\eps}{2}\right)\bOPT$ and $X_{\bT}$ the event that $\bOPT_{\bT} \geq \left(1-\frac{\eps}{2}\right)\bOPT$. Then, we have: 
\begin{equation*}
\Pr \left[ X_{\bS} \cap X_{\bT} \right] = 1 - \Pr \left[ \overline{X_{\bS}} \cup \overline{X_{\bT}} \right] \geq 1 - 2\delta  
\end{equation*} 
where the inequality follows from the Union Bound.
\end{proof}

We are now ready to state our Competitive Market mechanism that will be used for approximating the optimal LW. We note here that the greedy algorithm $\calA$ is due to \citet{LLN06}. 

\begin{algorithm}[htbp]
\caption{Competitive Market (CM) Algorithm}\label{algo:lcma}
\begin{algorithmic}[1]
\State Divide the bidders into sets $\bS, \bT$ uniformly at random, s.t., $|\bS| = \frac{n}{2} = |\bT|$.
\State Run the greedy algorithm $\calA$ for bidders in $\bS$ and denote the solution obtained by $\calA^{\bS}$.
\For{$j \in U$}
\State Set $p_j = \frac{1}{2\beta}\bv \left( \calA_j^{\bS} \right)$, where $\beta > 1$ is a constant
\EndFor
\State Fix an internal ordering of bidders in $\bT$, $\pi$, and set $U_1 = U$.
\For{each bidder $i \in \bT$ arriving according to $\pi$}
\State Let $S_i = \BCDQ(v_i, U_i, \vp)$.
\State Set $U_{i+1} = U_i \setminus S_i$.
\EndFor
\end{algorithmic}
\end{algorithm}

As usual, we denote $\calS = (S_1, \dots, S_n)$ the final allocation from mechanism presented in Algorithm~\ref{algo:lcma}. Valuations of bidders are XOS (and so are the liquid valuations (Lemma \ref{lem:val-liqval})); let $a_i$ be the maximizing clause of $S_i$ in the liquid valuation $\bv_i$ of bidder $i$. Since $a_i$'s are additive, for each bidder $i$ and $j \in S_i$ let $q_j = a_i (\{j\})$. Notice that $\sum_{i \in [n]} \bv (S_i) = \sum_{j \in \cup_{i \in [n]} S_i}q_j$. We denote by $\bOPT_{\bT} = \sum_{j \in U} q_j^{\bT}$, where $q_j^{\bT}$ is the contribution of item $j$ in $\bOPT_{\bT}$. We divide the set of all items $U$ into two sets; the set of \emph{competitive} items, denoted by $\calC$ and the set of \emph{non-competitive} items, denoted by $\bcalC = \calM \setminus \calC$. The following lemma upper bounds the contribution of non-competitive items in the optimal solution.

\begin{lemma}\label{lem:contr-non-compet}
Let $\calC = \{j \big| q_j^{\bT} > \frac{\bv(\calA_j^{\bS})}{\beta}\}$ for constant $\beta >1$. Then, $\sum_{j \in \bcalC} q_j^{\bT} \leq \frac{\eps}{2(\beta - 1)}\bOPT$ and $\sum_{j \in \calC}q_j^{\bT} \geq \frac{\beta(2-\eps)-2}{2(\beta - 1)}\bOPT$.
\end{lemma}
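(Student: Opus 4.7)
The plan is to construct a hybrid allocation over all $n$ bidders that marries the greedy allocation on $\bS$ with the optimal liquid-welfare allocation on $\bT$, using the partition $\{\calC, \bcalC\}$ of the items. Since the hybrid is a feasible allocation on $[n]$, its liquid welfare cannot exceed $\bOPT$, which will yield the desired upper bound on $\sum_{j \in \bcalC} q_j^{\bT}$; the second inequality then falls out by simple subtraction.

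More precisely, for each $i \in \bS$, let $T_i$ be the bundle $\calA^{\bS}$ assigns to $i$ and $a_i$ the maximizing XOS clause of $\bv_i$ on $T_i$, so that $\bv(\calA_j^{\bS}) = a_i(\{j\})$ for $j \in T_i$. Symmetrically, for $i' \in \bT$, let $O_{i'}$ be the bundle that $\bOPT_{\bT}$ assigns to $i'$ and $b_{i'}$ the maximizing XOS clause of $\bv_{i'}$ on $O_{i'}$, so that $q_j^{\bT} = b_{i'}(\{j\})$ for $j \in O_{i'}$. Define the hybrid allocation $\calH$ by giving each $i \in \bS$ the bundle $T_i \cap \bcalC$ and each $i' \in \bT$ the bundle $O_{i'} \cap \calC$. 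Since $\{\calC, \bcalC\}$ partitions $U$ and $\{\bS, \bT\}$ partitions $[n]$, $\calH$ is a feasible partition of the items across the $n$ bidders. Using that liquid valuations remain XOS (Lemma~\ref{lem:val-liqval}), $\bv_i(T_i \cap \bcalC) \geq a_i(T_i \cap \bcalC) = \sum_{j \in T_i \cap \bcalC} \bv(\calA_j^{\bS})$ and analogously $\bv_{i'}(O_{i'} \cap \calC) \geq \sum_{j \in O_{i'} \cap \calC} q_j^{\bT}$. Summing over all bidders,
\[
\bv(\calH) \;\geq\; \sum_{j \in \bcalC} \bv(\calA_j^{\bS}) + \sum_{j \in \calC} q_j^{\bT} \;\geq\; \beta \sum_{j \in \bcalC} q_j^{\bT} + \sum_{j \in \calC} q_j^{\bT} \;=\; (\beta-1)\sum_{j \in \bcalC} q_j^{\bT} + \bOPT_{\bT},
\]
where the second inequality uses the defining property of $\bcalC$ and the equality uses $\bOPT_{\bT} = \sum_{j \in U} q_j^{\bT}$.

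Combining $\bv(\calH) \leq \bOPT$ with the competitive-market guarantee $\bOPT_{\bT} \geq (1-\eps/2)\bOPT$ from Definition~\ref{def:competitive} then gives
\[
\sum_{j \in \bcalC} q_j^{\bT} \;\leq\; \frac{\bOPT - \bOPT_{\bT}}{\beta - 1} \;\leq\; \frac{\eps}{2(\beta-1)}\,\bOPT,
\]
which is the first claim. Plugging this into $\sum_{j \in \calC} q_j^{\bT} = \bOPT_{\bT} - \sum_{j \in \bcalC} q_j^{\bT}$ and using $\bOPT_{\bT} \geq (1-\eps/2)\bOPT$ yields $\sum_{j \in \calC} q_j^{\bT} \geq \bigl((1-\eps/2) - \eps/(2(\beta-1))\bigr)\bOPT$, which simplifies to $(\beta(2-\eps)-2)/(2(\beta-1))\cdot \bOPT$. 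The main hurdle I anticipate is fixing $\bv(\calA_j^{\bS})$ as the additive XOS share of $j$ in its $\calA^{\bS}$-bundle (so that $\sum_{j \in B} \bv(\calA_j^{\bS}) \leq \bv_i(B)$ for every $B \subseteq T_i$); once that interpretation is pinned down, the argument is a direct application of XOS combined with the competitive-market inequality.
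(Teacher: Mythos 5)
Your proof is correct and follows essentially the same route as the paper: the hybrid allocation $\calH$ you build (greedy bundles restricted to $\bcalC$ for $\bS$, optimal-$\bOPT_{\bT}$ bundles restricted to $\calC$ for $\bT$) is exactly the allocation $\calQ$ the paper constructs over $\bT \cup \bS_{\bcalC}$, and the chain $\bOPT \geq \bv(\calH) \geq \sum_{j\in\calC} q_j^{\bT} + \beta\sum_{j\in\bcalC} q_j^{\bT}$ combined with the competitive-market bound is the same argument. Your explicit XOS-clause justification of $\bv(\calH) \geq \sum_{j \in \bcalC} \bv(\calA_j^{\bS}) + \sum_{j \in \calC} q_j^{\bT}$, including pinning down $\bv(\calA_j^{\bS})$ as item $j$'s additive share in its greedy bundle, is the right reading of what the paper relegates to a footnote.
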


\begin{proof}
From Definition \ref{def:competitive}, it holds with constant probability (w.c.p) that: 
\begin{equation*}
\bOPT \geq \sum_{j \in \calC}q_j^{\bT} + \sum _{j \in \bcalC}q_j^{\bT} =  \sum _{j \in U}q_j^{\bT} \geq \left(1-\frac{\eps}{2}\right) \cdot \bOPT 
\end{equation*}
Let $\bS_{\bcalC} \subseteq \bS$ be the set of the bidders that are allocated the non-competitive items from the greedy algorithm $\calA$ when running on set $\bS$. Then, in the augmented set $\bT \cup \bS_{\bcalC}$, there exists an allocation $\calQ$%
\footnote{Allocation $\calQ$ is realized by allocating all items in $\calC$ to bidders in $\bT$ that also had them in the $\bOPT_{\bT}$ allocation and all items in $\bcalC$ to the bidders in $\bS_{\bcalC}$ that had them in the allocation of the greedy $\calA$. The claim is completed by submodularity.} 
with liquid valuation, 
\begin{equation}\label{eq:lv1}
\bv(\calQ) \geq \sum_{j \in \calC} q_j^{\bT} + \sum_{j \in \bcalC} \bv\left(\calA_j^{\bS}\right)
\end{equation}
and therefore we have w.c.p:
\begin{align*}
\bOPT &\geq \bv(\calQ) \geq \sum_{j \in \calC} q_j^{\bT} + \sum_{j \in \bcalC} \bv \left(\calA_j^{\bS}\right) \geq  \sum_{j \in \calC}q_j^{\bT} + \beta \sum _{j \in \bcalC}q_j^{\bT} \\
&\geq \left(1 -\frac{\eps}{2}\right)\bOPT + (\beta -1)\sum _{j \in \bcalC}q_j^{\bT} 
\end{align*}
Re-arranging the latter and using the fact that 
\begin{equation*}
\sum_{j \in \calC} q_j + \frac{\eps}{2(\beta - 1)} \bOPT \geq \sum_{j \in U}q_j^{\bT} \geq \left(1-\frac{\eps}{2}\right)\bOPT
\end{equation*} 
then, for the items in $\calC$ it holds w.c.p that: $\sum_{j \in \calC}q_j^{\bT} \geq \frac{\beta(2-\eps)-2}{2(\beta - 1)}\bOPT$.
\end{proof}

In the next Lemma, we prove a lower bound on the contribution of competitive items to the solution obtained by the greedy algorithm, with respect to $\bOPT$.

\begin{lemma}\label{lem:bound-alg}
$\sum_{j \in \calC} \bv \left( \calA_j^{\bS} \right) \geq \frac{2 (\beta - 1) - \eps \cdot (3\beta -1)}{4(\beta-1)}\bOPT$.
\end{lemma}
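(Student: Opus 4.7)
The plan is to decompose $\bv(\calA^{\bS}) = \sum_{j \in \calC} \bv(\calA_j^{\bS}) + \sum_{j \in \bcalC} \bv(\calA_j^{\bS})$, rearrange for the target quantity, and then independently lower-bound $\bv(\calA^{\bS})$ and upper-bound the non-competitive piece. Both bounds will be invoked ``with constant probability'' in the sense of the competitive-market event already used in Lemma~\ref{lem:contr-non-compet}, so the conclusion inherits the same probabilistic guarantee.

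For the lower bound on $\bv(\calA^{\bS})$, I would use that the greedy algorithm $\calA$ of \citet{LLN06} is a $2$-approximation for submodular/XOS welfare, so $\bv(\calA^{\bS}) \geq \tfrac{1}{2}\bOPT_{\bS}$. The Competitive Market definition then gives $\bOPT_{\bS} \geq (1-\eps/2)\bOPT$, so $\bv(\calA^{\bS}) \geq \frac{2-\eps}{4}\bOPT$.

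For the upper bound on the non-competitive piece, I would re-run the swap argument already used inside Lemma~\ref{lem:contr-non-compet}: the allocation $\calQ$ constructed there satisfies $\bOPT \geq \bv(\calQ) \geq \sum_{j \in \calC} q_j^{\bT} + \sum_{j \in \bcalC} \bv(\calA_j^{\bS})$. Plugging in the already-established lower bound $\sum_{j \in \calC} q_j^{\bT} \geq \frac{\beta(2-\eps)-2}{2(\beta-1)}\bOPT$ and solving for the non-competitive contribution yields $\sum_{j \in \bcalC} \bv(\calA_j^{\bS}) \leq \frac{\beta\eps}{2(\beta-1)}\bOPT$. (Note that I cannot just use the definition of $\bcalC$ to bound $\bv(\calA_j^{\bS})$ item-wise, since that inequality goes the wrong direction; the swap argument is the right handle.)

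Combining the two bounds gives
\begin{equation*}
\sum_{j \in \calC} \bv(\calA_j^{\bS}) \;\geq\; \frac{2-\eps}{4}\bOPT \;-\; \frac{\beta\eps}{2(\beta-1)}\bOPT \;=\; \frac{(2-\eps)(\beta-1)-2\beta\eps}{4(\beta-1)}\bOPT \;=\; \frac{2(\beta-1)-\eps(3\beta-1)}{4(\beta-1)}\bOPT,
\end{equation*}
which is exactly the claim. The main obstacle is really bookkeeping: making sure that the $2$-approximation of the greedy of \citet{LLN06} applies to the liquid valuations (which is fine by Lemma~\ref{lem:val-liqval}, since they remain XOS/submodular), and that all the probabilistic events (both halves being near-optimal in the Competitive Market sense) are consistent with the w.c.p.\ qualifier already carried through Lemma~\ref{lem:contr-non-compet}; once those are in place, the algebra is a one-line combination.
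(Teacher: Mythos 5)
Your proof is correct and follows essentially the same route as the paper: bound $\sum_{j \in \bcalC} \bv(\calA_j^{\bS})$ from above by combining the swap allocation $\calQ$ (Inequality~\eqref{eq:lv1}) with the lower bound on $\sum_{j\in\calC} q_j^{\bT}$ from Lemma~\ref{lem:contr-non-compet}, then lower-bound the total greedy value via the $2$-approximation of \citet{LLN06} and the competitive-market guarantee on $\bOPT_{\bS}$, and subtract. The algebra and the handling of the w.c.p.\ events match the paper's argument.
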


\begin{proof}
Combining Inequality~\eqref{eq:lv1} and Lemma~\ref{lem:contr-non-compet} we get that $\sum_{j \in \bcalC} \bv \left( \calA_j^{\bS} \right) \leq \frac{\beta \eps}{2 (\beta-1)} \bOPT$. Algorithm $\calA$ provides a $2$-approximation to the optimal LW of set $\bS$ \citep{LLN06}, so w.c.p we have: 
\begin{equation*}
\sum_{j \in \calC} \bv \left( \calA_j^{\bS} \right) + \sum_{j \in \bcalC} \bv \left( \calA_j^{\bS} \right) \geq \frac{1}{2} \bOPT_{\bS} \geq \frac{1 - \frac{\eps}{2}}{2} \bOPT
\end{equation*}
Combining the last two equations, we get the result.
\end{proof}

\begin{theorem}\label{thm:cm-apx}
The CM Algorithm is \emph{universally truthful} and achieves, in expectation, a constant approximation to the optimal LW, i.e.,
\begin{equation*}
\E \left[ \bv \left( \calS \right) \right] \geq (1 - 2\delta) \cdot \frac{2(\beta -1)-\eps\cdot(3\beta-1)}{16\beta(\beta-1)}\bOPT
\end{equation*}
\end{theorem}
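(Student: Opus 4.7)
The plan is to establish universal truthfulness by inspection of the price-setting step and then to bound $\bv(\calS)$ deterministically on the good event that both halves of the market retain near-optimal liquid welfare, paying the probability factor at the very end. Universal truthfulness is immediate: the posted prices $p_j$ depend only on the greedy allocation $\calA^{\bS}$ computed on the random half $\bS$, so for every bidder $i \in \bT$ the price vector she faces is fixed independently of her report, and the allocation step $\BCDQ(v_i, U_i, \vp, B_i)$ is truthful under budget constraints by Definition~\ref{def:truth-budgets} (the same argument as for the core mechanism). Bidders in $\bS$ receive nothing and pay nothing, so truth-telling is trivially dominant for them.

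For the approximation guarantee, I condition on the event that $\bOPT_{\bS}, \bOPT_{\bT} \geq (1-\eps/2)\bOPT$, which holds with probability at least $1-2\delta$ by the preceding proposition and on which the bound of Lemma~\ref{lem:bound-alg} is valid. Fix any bidder $i \in \bT$ and let $O_i^*$ be her bundle in $\bOPT_{\bT}$, with competitive part $O_i^{\calC} = O_i^* \cap \calC$. Applying Lemma~\ref{lem:bfdo}(i) with $T = O_i^{\calC} \cap U_i$ yields $\bv_i(S_i) \geq \bv_i(O_i^{\calC} \cap U_i) - p(O_i^{\calC} \cap U_i)$. Since $\bv_i$ is XOS by Lemma~\ref{lem:val-liqval}, the additive clause supporting $\bv_i(O_i^*)$ gives $\bv_i(O_i^{\calC} \cap U_i) \geq \sum_{j \in O_i^{\calC} \cap U_i} q_j^{\bT}$, while the definition of competitive items gives $q_j^{\bT} > \bv(\calA_j^{\bS})/\beta = 2 p_j$ for every $j \in \calC$, and hence $q_j^{\bT} - p_j > p_j$. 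Chaining these yields $\bv_i(S_i) \geq \sum_{j \in O_i^{\calC} \cap U_i} p_j$.

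Let $\calC^{\mathrm{avail}}$ denote the competitive items still available in $U_{i^*(j)}$ when their intended recipient $i^*(j) \in \bT$ (in $\bOPT_{\bT}$) arrives, and $\calC^{\mathrm{early}} = \calC \setminus \calC^{\mathrm{avail}}$ the remaining competitive items, each of which was necessarily sold to some earlier bidder in $\bT$. Summing the previous bound over $i \in \bT$ gives $\bv(\calS) \geq \sum_{j \in \calC^{\mathrm{avail}}} p_j$, and the affordability property of $\BCDQ$ (so $\bv_{i'}(S_{i'}) \geq p(S_{i'})$ for each $i' \in \bT$) gives $\bv(\calS) \geq \sum_{i' \in \bT} p(S_{i'}) \geq \sum_{j \in \calC^{\mathrm{early}}} p_j$. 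Adding the two bounds and substituting $p_j = \bv(\calA_j^{\bS})/(2\beta)$ yields
\[
2\bv(\calS) \;\geq\; \sum_{j \in \calC} p_j \;=\; \frac{1}{2\beta}\sum_{j \in \calC} \bv(\calA_j^{\bS}).
\]
Combining with Lemma~\ref{lem:bound-alg} on the good event and multiplying by its probability $1-2\delta$ produces the stated bound on $\E[\bv(\calS)]$. The main obstacle is the bookkeeping in the last step: each competitive item must be charged exactly once, either to its intended recipient in $\bOPT_{\bT}$ through the BCDQ-efficiency bound of Lemma~\ref{lem:bfdo} (when the item is still available) or to an earlier buyer through affordability (when the item has already been sold), and this relies precisely on the clean partition $\calC = \calC^{\mathrm{avail}} \sqcup \calC^{\mathrm{early}}$.
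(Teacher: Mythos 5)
Your proof is correct and follows essentially the same strategy as the paper: truthfulness from the independence of the posted prices from the reports of bidders in $\bT$, and the approximation bound from the fact that competitive items satisfy $q_j^{\bT} > 2p_j$, combined with the BCDQ efficiency guarantees of Lemma~\ref{lem:bfdo} and the lower bound of Lemma~\ref{lem:bound-alg}. The only difference is that where the paper invokes its extension of Dobzinski's Lemma~4.2 (Lemma~\ref{lem:4.2}) as a black box for the fixed-price-auction guarantee $\bv(\calS)\geq\frac{1}{4\beta}\sum_{j\in\calC}\bv(\calA_j^{\bS})$, you re-derive it inline via the partition of $\calC$ into items available at their intended recipient's arrival versus items sold earlier --- an equivalent charging argument to the telescoping potential $\OPT_i$ used in the paper's appendix, yielding the same factor of~$4$.
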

\begin{proof}
Since the bidders that control the prices being posted belong to set $\bS$ and they never get any item, it is their (weakly) dominant strategy to report their valuations and their budgets truthfully. Furthermore, the bidders that are buying under the said posted prices belong to set $\bT$ and they make BCDQs, which we showed to be truthful. Finally, the bidders are \emph{uniformly at random} split to sets $\bS$ and $\bT$.

For each item $j \in \calC$ we have $q_j^{\bT} > \bv (\calA_j^{\bS})/\beta$. Therefore, there exists an allocation for bidders in $\bT$ and items in $\calC$ that is supported by prices $p_1, \dots, p_m$, where $p_j =\bv(\calA_j^{\bS})/\beta$. Thus, a modification of \cite[Lemma~4.2]{D16} implies that if we we set $p_j' = p_j/2$, for each $j \in \calC$, and run a fixed price auction in $\bT$ with prices $p_1',\dots,p_m'$, we get that $\bv(\calS) \geq \sum_{j \in \calC} p_j/4$. Using the latter, along with the prices of the items, we have that 
\begin{equation*}
\bv(\calS) = \frac{1}{4\beta}\sum_{j \in \calC} \bv(\calA_j^{\bS})\geq \frac{2(\beta -1)-\eps(3\beta-1)}{16\beta(\beta-1)} \bOPT
\end{equation*}
where the last inequality is due to Lemma \ref{lem:bound-alg}. Thus, we conclude that $\E\left[\bv\left( \calS \right)\right] \geq \left(1-2\delta\right)\frac{2(\beta -1)-\eps \cdot (3\beta-1)}{16\beta(\beta-1)}\bOPT$.
\end{proof}

\section{Bayesian Setting}\label{sec:stochastic}

The Bayesian Setting offers a great middle ground between the unstructured worst-case instances and the very structured Competitive Markets. In this setting, let $\vv = (v_1, \dots, v_n)$ be a profile of bidder valuations and $\vB = (B_1, \dots, B_n)$ a profile of bidder budgets. Assume that the bidders' valuations are drawn independently from distributions $\calV_1, \dots, \calV_n$ and the budgets from distributions $\calB_1, \dots, \calB_n$. For simplicity, let us assume that their liquid valuations are drawn independently from distributions $\calD_1, \dots, \calD_n$. We will denote by $\calD = \calD_1 \times \dots \times \calD_n$ the product distribution where liquid valuations profiles, $\bbv = (\bv_1, \dots, \bv_n)$, are independently drawn from.  

We are going to show that the results presented in \citet{FGL14} can be extended for budget-constrained bidders. Specifically, we are going to show that, if liquid valuations are fractionally subadditive, then we can create appropriate prices such that, when presented to the bidders in a posted-price mechanism and bidders are making BCDQs, then we can obtain universally truthful constant-factor approximation mechanisms for the LW in Bayesian CAs. Our Lemma \ref{lem:like3.4} establishes the existence of such appropriately scaled prices. The key component activating our results is that instead of reasoning about the \emph{utility} achieved from the bundle purchased by bidder $i$ (as received by the BCDQ), we instead have to use Lemma \ref{lem:bfdo}. %
We also note that using our techniques one could even achieve the better approximation guarantees presented by \citet{DFKL17}. Their analysis is significantly more complex, however, and we omit it in the interest of space. %

\begin{theorem}\label{thm:anon-price}
Let distribution $\calD$ over XOS liquid valuation profiles be given via a sample access to $\calD$. Suppose that for every $\bbv \sim \calD$, we have: 
\begin{enumerate*}[label=\roman*)]
\item black-box access to a LW maximization algorithm, $\ALG$\footnote{$\ALG$ can be any algorithm that provides a $O(1)$-approximation to the optimal LW, since we do not care about incentives (access to $\ALG$ will only happen for ghost samples).}  
\item an XOS value query oracle (for liquid valuations sampled from $\calD$)\footnote{An XOS value oracle takes as input a set $T$ and returns the corresponding additive representative function for the set $T$, i.e., an additive function $A_i(\cdot)$, such that (i) $\bv_i(S) \geq A_i(\hat{S})$ for any $\hat{S} \subset [m]$ and (ii) $\bv_i(T) = A_i(T)$.}.
\end{enumerate*}
Then, for any $\epsilon > 0$, we can compute item prices in $\POLY(m,n,1/\epsilon)$ time such that, for any bidder arrival order, the expected liquid welfare of the posted price mechanism is at least $\frac{1}{4}\E_{\bbv \sim \calD} [\bbv (\ALG (\bbv))] - \epsilon$, where by $\ALG(\bbv)$ we denote the solution produced by algorithm $\ALG$.
\end{theorem}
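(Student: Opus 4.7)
The plan is to follow the template of \citet{FGL14}, adapting each step for liquid welfare and budget-constrained bidders via the BCDQ efficiency bound established in Lemma~\ref{lem:bfdo}. Concretely, I would first draw $N = \POLY(m, n, 1/\eps)$ ghost samples $\bbv^{(1)}, \dots, \bbv^{(N)} \sim \calD$, run $\ALG$ on each to obtain allocations $\calO^{(\ell)}$, and invoke the XOS value query oracle on every bundle $O_i^{(\ell)}$ to extract the supporting additive clause $a_i^{(\ell)}$. For each item $j$, the estimator $\hat{q}_j = \frac{1}{N} \sum_\ell \sum_i a_i^{(\ell)}(\{j\})\,\mathbbm{1}[j \in O_i^{(\ell)}]$ approximates $q_j = \E_{\bbv \sim \calD}\!\left[\sum_i a_i(\{j\})\,\mathbbm{1}[j \in O_i(\bbv)]\right]$, and I would post the scaled price $p_j = \hat{q}_j / 2$. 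The standard concentration-plus-truncation argument from \citep{FGL14} shows that polynomially many samples suffice to replace $\hat{q}_j$ with $q_j$ in the analysis at the cost of the additive $\eps$ slack in the theorem.

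Next, I would carry out the welfare analysis as in \citep{FGL14}, with one crucial substitution: wherever that proof invokes the demand-query inequality $v(S) - p(S) \geq v(T) - p(T)$, I would instead invoke the BCDQ inequality from Lemma~\ref{lem:bfdo}(ii), namely
\begin{equation*}
2\bv_i(S_i) - p(S_i) \,\geq\, \bv_i(T) - p(T) \qquad \text{for all } T \subseteq U_i.
\end{equation*}
Setting the comparison bundle $T = O_i \cap U_i$, where $O_i$ is the bundle assigned to bidder $i$ by $\ALG$ on the \emph{real} profile $\bbv$, summing over bidders, and taking expectations over $\bbv$ and the arrival randomness yields
\begin{equation*}
2\,\E[\bv(\calS)] - \E[p(\calS)] \,\geq\, \sum_i \E\!\left[\bv_i(O_i \cap U_i)\right] - \sum_i \E\!\left[p(O_i \cap U_i)\right].
\end{equation*}

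To finish, I would lower-bound the right-hand side using: (i) the XOS property $\bv_i(O_i \cap U_i) \geq a_i(O_i \cap U_i)$; (ii) the independence of $(\bbv_i, O_i, a_i)$ from $U_i$, which depends only on $\bbv_{<i}$ and the prices (themselves a function of ghost samples only); and (iii) disjointness of the $O_i$'s in $\ALG$'s allocation, which yields $\sum_i p(O_i \cap U_i) \leq p(U)$. After the standard algebra, the choice $p_j = q_j/2$ makes the price terms on the two sides cancel up to a factor of two, leaving $4\,\E[\bv(\calS)] \geq \E_{\bbv \sim \calD}[\bv(\ALG(\bbv))]$ modulo the sampling error $\eps$. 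The factor $1/4$ in the statement is exactly the price paid for the extra coefficient $2$ on the left-hand side of Lemma~\ref{lem:bfdo}(ii), compared to the $1/2$ approximation of the social-welfare analog in \citep{FGL14}.

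The main obstacle is the bookkeeping of the expectation/telescoping argument: I need the independence structure above to survive the substitution of the empirical prices $\hat{q}_j$ for the true $q_j$, and to handle the fact that $U_i$ is correlated with prior bidders' random valuations through the mechanism's internal state. Truthfulness itself is immediate, since prices are determined solely by ghost samples and each real bidder responds via a BCDQ whose output is her own utility maximizer subject to her budget (Lemma~\ref{lem:val-liqval} ensures $\bv_i$ remains XOS, so the $\BCDQ$ framework of Section~\ref{sec:approach} applies). The remaining subtlety---ensuring the polynomial sample complexity under potentially unbounded liquid valuations---is inherited from \citep{FGL14}, since truncating each $\bv_i$ at the budget $B_i$ only strengthens their concentration bounds.
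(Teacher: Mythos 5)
Your overall architecture matches the paper's: estimate $q_j=\E_{\bbv\sim\calD}[\LW_j(\bbv)]$ from ghost samples via the XOS value oracle, post $p_j\approx q_j/2$, and rerun the welfare analysis of \citet{FGL14} with the demand-query inequality replaced by Lemma~\ref{lem:bfdo}(ii) --- which is indeed exactly where the extra coefficient $2$, and hence the $1/4$ instead of $1/2$, comes from. The paper packages this as Lemma~\ref{lem:like3.4} and inherits the sampling/concentration step from \citet{FGL14} unchanged because liquid valuations stay XOS, so your constant accounting and your truthfulness remarks are fine.

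There is, however, a genuine gap in your step (ii). You take the comparison bundle to be $T=O_i\cap U_i$ with $O_i=O_i(\bbv)$ the bundle that $\ALG$ assigns to bidder $i$ on the \emph{real} profile, and you assert that $(\bv_i,O_i,a_i)$ is independent of $U_i$. That is false: $\ALG$ computes its allocation from the entire profile $\bbv$, so $O_i(\bbv)$ depends on the valuations $\bbv_{<i}$ of the bidders who arrive before $i$, and $U_i$ depends on exactly those same $\bbv_{<i}$ through what the earlier bidders purchased. The factorization $\E\bigl[a_i(\{j\})\,\1\{j\in O_i\}\,\1\{j\in U_i\}\bigr]=\E\bigl[a_i(\{j\})\,\1\{j\in O_i\}\bigr]\cdot\Pr[j\in U_i]$ that your telescoping needs therefore does not follow, and the correlation can go the wrong way. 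The missing idea is the resampling device of \citet{FGL14}, reproduced in the paper's proof of Lemma~\ref{lem:like3.4}: bidder $i$ is compared against $S_i(\bv_i,\bbv_{-i}')\setminus\SOLD_i(\bbv,\pi)$, where $\bbv_{-i}'\sim\calD_{-i}$ is drawn independently of $\bbv$. Since $(\bv_i,\bbv_{-i}')$ is again distributed according to $\calD$, summing the expected supporting prices over $i$ still recovers $\E[\LW_j(\cdot)]=2p_j$; and since $\SOLD_i(\bbv,\pi)$ is a function of $\bbv_{<i}$ while the comparison bundle is a function of the disjoint, independent collection $(\bv_i,\bbv_{-i}')$, the availability probability $\Pr[j\notin\SOLD_i(\bbv,\pi)]$ now genuinely factors out. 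With that substitution, the rest of your argument (adding the revenue identity, disjointness of $\ALG$'s bundles, and the $m\delta$ slack from using empirical prices) goes through as in the paper.
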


\begin{lemma}\label{lem:like3.4}
Given a distribution $\calD$ over XOS liquid valuations, let $\p$ be the price vector s.t. $p_j = \frac{1}{2} \E_{\bbv \sim \calD} [\LW_j(\bbv)]$. Let $\p'$ be any price vector such that $|p_j' - p_j| < \delta$ for all $j \in [m]$. Then, for any arrival order, $\pi$, bidders buying bundles by making BCDQs under prices $\p'$ results in expected liquid welfare at least $\frac{1}{4} \E_{\bbv \sim \calD} [\bbv (\ALG(\bbv))] - \frac{m\delta}{2}$.
\end{lemma}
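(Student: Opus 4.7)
The plan is to mimic the proof of Lemma~3.4 in \cite{FGL14}, replacing the standard demand-query utility inequality $v(S)-p(S)\ge v(T)-p(T)$ by Lemma~\ref{lem:bfdo}. The extra factor of $2$ appearing in front of $\bv(S)$ in Lemma~\ref{lem:bfdo}(ii) is exactly what causes the approximation ratio to degrade from $\tfrac{1}{2}$ (the bound achieved in \cite{FGL14}) to $\tfrac{1}{4}$ in our liquid-welfare setting.

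First, I would fix the arrival order $\pi$, let $\calS = (S_1,\ldots,S_n)$ be the allocation produced by the mechanism under $\p'$, and let $U_i$ denote the items still available when bidder $i$ arrives. Let $\calA(\bbv) = (A_1(\bbv),\ldots,A_n(\bbv)) = \ALG(\bbv)$ be the reference allocation, and use the XOS value oracle to fix additive representatives $(a_{ij})_{j \in A_i(\bbv)}$ of $\bv_i$ on the bundle $A_i(\bbv)$. Applying Lemma~\ref{lem:bfdo}(ii) to each bidder $i$ with $T = A_i(\bbv) \cap U_i$ and then using the XOS inequality $\bv_i(A_i(\bbv) \cap U_i) \ge \sum_{j \in A_i(\bbv) \cap U_i} a_{ij}$ gives
\[
2\bv_i(S_i) - p'(S_i) \;\ge\; \sum_{j \in A_i(\bbv) \cap U_i}\bigl(a_{ij} - p'_j\bigr).
\]
Summing over $i$, writing $X_j(\bbv) = \LW_j(\bbv) = a_{i(j),j}$ for the contribution of item $j$ to $\bv(\ALG(\bbv))$ (where $i(j)$ is the bidder holding $j$ in $\ALG(\bbv)$), and using that each $j$ belongs to at most one $A_i(\bbv)$, I obtain
\[
2\bv(\calS) - p'(\calS) \;\ge\; \sum_{j \in U}\bigl(X_j(\bbv) - p'_j\bigr)\,\1[j \in U_{i(j)}].
\]

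Next, I would take expectations over $\bbv \sim \calD$. The standard FGL14 argument --- which uses $p_j = \tfrac{1}{2}\E[X_j(\bbv)]$, the fact that $\1[j \in U_{i(j)}] \ge \1[j \notin \calS]$ (an item that is never sold is in particular available when $i(j)$ arrives), and a coupling step to decouple $X_j$ from the availability indicator --- yields $\E\!\left[\sum_j X_j(\bbv)\,\1[j \in U_{i(j)}]\right] \ge \sum_j p_j = \tfrac{1}{2}\E[\bv(\ALG(\bbv))]$. Combining this with the affordability bound $\bv(\calS) \ge p'(\calS)$ from Lemma~\ref{lem:bfdo}(i) and using $|p'_j - p_j| < \delta$ to replace $\p'$ by $\p$ at an additive cost of at most $m\delta$, the displayed inequality rearranges to $\E[\bv(\calS)] \ge \tfrac{1}{4}\,\E_{\bbv \sim \calD}[\bv(\ALG(\bbv))] - \tfrac{m\delta}{2}$.

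The main obstacle is the decoupling inside the expectation, i.e., lower-bounding $\E\!\left[\sum_j X_j(\bbv)\,\1[j \in U_{i(j)}]\right]$ when both $X_j$ and the availability indicator $\1[j \in U_{i(j)}]$ are functions of the same random sample $\bbv$. This is precisely the obstacle already addressed in \cite{FGL14}, and their treatment transfers verbatim to our setting, because Lemma~\ref{lem:bfdo}(ii) preserves the functional form of the per-bidder inequality --- only the coefficient in front of $\bv(S_i)$ changes from $1$ to $2$. That extra factor of $2$ is the single new element of the derivation and is what propagates cleanly to the $\tfrac{1}{4}$ constant in the final bound.
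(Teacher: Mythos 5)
Your proposal is correct and follows essentially the same route as the paper's proof: apply Lemma~\ref{lem:bfdo}(ii) per bidder in place of the standard demand-query inequality, decouple the reference allocation from the availability indicator via the fresh-sample (ghost) argument of \citet{FGL14}, and cancel the posted-price term against the revenue/affordability bound, with the factor~$2$ in front of $\bv(S_i)$ degrading the constant from $\tfrac{1}{2}$ to $\tfrac{1}{4}$. The only cosmetic difference is that you first state the per-bidder inequality with the same-sample reference bundle $A_i(\bbv)\cap U_i$ and invoke the coupling afterwards, whereas the paper applies Lemma~\ref{lem:bfdo} directly to $S_i(\bv_i,\bbv_{-i}')\setminus\SOLD_i(\bbv,\pi)$ for an independent $\bbv_{-i}'$; both orderings lead to the same computation.
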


\section{Conclusion}\label{sec:concl}

In real-life auctions, bidders are \emph{always} constrained by budgets, which we tend to overlook due to the technical difficulties that they add. The role of budgets in welfare/revenue optimization is amplified in CAs, where bidders have richer valuations and hence, studying budgeted CAs is a step towards bridging the gap between the theory on truthful mechanism design for CAs and constraints faced in practice. In this work, we showed how the liquid welfare can be approximated in CAs where bidders are budget-constrained in three settings: worst-case, Competitive Markets and stochastic. The most meaningful question that arises from our work (apart, of course, from the ever existent one of lowering the approximation guarantee in worst-case instances) is related to the competitive markets. We conjecture that the condition that we provide can be made even weaker, and leave it to future research. 

Finally, our results can also be used to extend a variety of already known results in CAs without budgets, to CAs \emph{with} budget-constrained bidders. For example, Lemma~\ref{lem:bfdo} (with some changes in the constants of \citep{DK17}) implies a constant factor approximation for best response dynamics in XOS CAs with budgeted bidders, that apply after a single round of bid updates. 

\section{Acknowledgments}

We are thankful to Yiling Chen for her most valuable feedback in an initial draft of this work, as well as the anonymous reviewers for their helpful comments and suggestions. Part of this work was done while Chara Podimata was visiting Yahoo Research, NY. This work is partially supported by the National Science Foundation under grant CCF-1718549 and the Harvard Data Science Initiative. Any opinions, findings, conclusions, or recommendations expressed here are those of the authors alone.

\bibliographystyle{aaai} 
\bibliography{refs}

\newpage
{\Large {\bf Appendix}}

\appendix

\section{Supplementary Material for Section \ref{sec:approach}.}\label{appendix:approach}

\begin{proof}[Proof of Lemma \ref{lem:val-liqval}]
Clearly, capping valuation with budget does not affect monotonicity. We provide the proof for each case (i.e., submodular, XOS, subbaditive) separately.
\begin{itemize}
\item (submodular) Let $v$ be a monotone submodular set function. Then, by the definition of submodularity, for sets $T \subseteq S$ and $j \notin S$ we have: 
\begin{equation}\label{eq:def-submod}
v \left( S \cup \{j\} \right) - v \left( S \right) \leq v \left( T \cup \{j\} \right) - v \left( T \right)
\end{equation}
Further, since $v$ is monotone: $v(T) \leq v(S)$, which implies that $\bv(T) \leq \bv(S)$. We distinguish the following cases: 
\begin{enumerate}
\item If $B \leq v \left( T \cup \{j\} \right) \leq v \left(S \cup \{j\} \right)$. Then, for the liquid valuations we have: $\bv\left( S \cup \{j\} \right) - \bv\left(S \right) = B - \bv\left(S \right) \leq B - \bv(T) \leq \bv \left( T \cup \{j\} \right) - \bv(T)$, where the first inequality is due to monotonicity.
\item If $\bv( T \cup \{j\}) \leq \bv (S \cup \{j\}) \leq B$. Then, $\bv( S \cup \{j\} ) - \bv(S ) = v ( S \cup \{j\} ) - v(S) \leq v ( T \cup \{j\} ) - v(T) = \bv(T \cup \{j\}) - \bv(T)$.
\item If $v ( T \cup \{j\} ) \leq B \leq v ( S \cup \{j\} )$. This breaks down to the following two cases; on the one hand, if $v(S) \geq B$ then, $\bv( S \cup \{j\} ) - \bv(S ) = 0 \leq v(T \cup \{j\}) - v(T) = \bv(T \cup \{j\}) - \bv(T)$. On the other hand, if $v(S) < B$, then $\bv(S \cup \{j\}) - \bv(S) = B - v(S) \leq v(S \cup \{j\}) - v(S) \leq v(T \cup \{j\}) - v(T) = \bv(T \cup \{j\}) - \bv(T)$. Finally, we remark that due to monotonicity, these cases are the only possible ones. 
\end{enumerate}

\item (XOS) Let $v$ be an XOS set function; there exist additive functions $\alpha_1, \dots, \alpha_l$ s.t. $v(S) = \max_{i \in [l]} \alpha_i (S)$. In order for $\bv$ to XOS, we need to prove that there exist additive functions $\alpha_1', \dots, \alpha_l'$ s.t. $\bv(S) = \max_{i \in [l]} \alpha_i'(S)$. For each function $\alpha_i$ we are going to define $m!$ functions, one for each permutation $\pi$ of the items. Suppose a specific ordering $\pi_t$ of the items $\{1,2,\dots,m\}$ and let $\pi_t(j)$ be the position of item $j$ in ordering $\pi_t$.  We define $ \beta_i^{\pi_t}$ as: $\beta_i^{\pi_t}(\{j\}) = \alpha_i(\{j\})$, if $\sum_{k: \pi_t(k) \leq \pi_t(j)}\alpha_i(\{k \}) \leq B$ or $\beta_i^{\pi_t}(\{j\}) = \max\{B-\sum_{k: \pi_t(k) < \pi_t(j)}\alpha_i( \{k \}), 0\}$, if $\sum_{k: \pi_t(k) \leq \pi_t(j)}\alpha_i(\{k \}) > B$.
First, we are going to prove that for each $S \subseteq U,  \beta_i^{\pi_t}(S) \leq \min\{v(S),B\}, \forall i, \pi_t$. 

By the definition of $\beta_i^{\pi_t}$, it is clear to see that $ \beta_i^{\pi_t}(\{j\}) \leq \alpha_i(\{j\})$. Therefore, summing upon all items in $S$ (since we have additive functions), we get that:
$$ \beta_i^{\pi_t}(S) \leq \alpha_i(S) \leq \max_k \alpha_k(S) = v(S)$$
By the definition of $\beta_i^{\pi_t}$, we also have that $\beta_i^{\pi_t}(S)\leq B$.

Next, we are going to prove that for each $S \subseteq U: \exists \beta_i^{\pi_t}$ s.t. $\beta_i^{\pi_t}(S) = \min\{v(S),B\}$. We distinguish the following cases:
\begin{enumerate}
\item $v(S) \leq B$. Let $\pi_t$ be a permutation, s.t. all items in $S$ come first and let $\alpha_{i^*}$ be the maximizing function for set $S$, i.e. $v(S) = \alpha_{i^*}(S)$. Then, because $\sum_{j \in S} \alpha_{i^*}(\{j\}) \leq B$, we have $\beta_{i^*}^{\pi_t}(S) = \sum_{j \in S} \beta_{i^*}^{\pi_t}(\{j\}) = \sum_{j \in S} \alpha_{i^*}(\{j\}) = v(S)$.

\item $v(S) > B$. Let $\pi_t$ be a permutation, s.t. all items in $S$ come first and let $\alpha_{i^*}$ be the maximizing function for set $S$, i.e. $v(S) = \alpha_{i^*}(S)$.  Let $j^*$ be the last item in the permutation $\pi_t$ s.t. $\sum_{r:\pi_t(r) \leq \pi_t(j^*)} \alpha_{i^*}(\{r\}) \leq B$. Then, $\sum_{r:\pi_t(r) \leq \pi_t(j^*)} \beta_{i^*}^{\pi_t}(\{r\}) = \sum_{r:\pi_t(r) \leq \pi_t(j^*)} \alpha_{i^*}(\{r\})$. For the next items $z \in S$ in permutation $\pi_t$, we have $\beta_{i^*}^{\pi_t}(\{z\}) = \max \{B-\sum_{k: \pi_t(k) < \pi_t(z)}\alpha_i (\{k \}), 0\}$. In fact, the first item after $j^*$ will complete the missing value, in order to have: $\sum_{k:\pi_t(k) \leq \pi_t(j^*)+1} \beta_{i^*}^{\pi_t}(\{j\}) = B$ , and all subsequent items, $q$ will have $\beta_{i^*}^{\pi_t}(\{q\})=0$. Therefore, $\sum_{j \in S} \beta_{i^*}^{\pi_t}(\{j\}) = B$.
\end{enumerate}

\item (subadditive) Let $v$ be a monotone subadditive set function. We distinguish the following cases: 
\begin{enumerate}
\item If $\bv(S \cup T) = v(S \cup T) < B$. Then, we know for a fact that $\bv (S) = v(S) < B$ and that $\bv(T) = v(T) < B$. Then, $\bv (S \cup T) = v(S \cup T) \leq v(S) + v(T) = \bv (S) + \bv(T)$, where the inequality comes from the subadditivity of $v$.
\item If $\bv(S \cup T) = B < v(S \cup T)$. We have to further distinguish the following cases: 
\begin{enumerate}
\item $\bv(S) = B < v(S), \bv(T) = B < v(T)$. Then, $\bv(S \cup T) = B \leq 2B = \bv(S) + \bv(T)$.
\item $\bv(S) = B < v(S), \bv(T) = v(T) < B$. Then, $\bv(S \cup T) = B \leq B + v(T) = \bv(S) + \bv(T)$, where the inequality comes from the non-negativity of the liquid valuation.
\item $\bv(S) = v(S) < B, \bv(T) = B < v(T)$. Then, $\bv(S \cup T) = B \leq v(S) + B = \bv(S) + \bv(T)$, where the inequality again comes from the non-negativity of the liquid valuation.
\item $\bv(S) = v(S) < B, \bv(T) = v(T) < B$. Then, $\bv(S \cup T) = B \leq v(S \cup T) \leq v(S) + v(T) = \bv(S) + \bv(T)$, where the last inequality comes from the fact that $v$ is subadditive.
\end{enumerate}
\end{enumerate}
\end{itemize}
\end{proof}

\section{Supplementary Material for Section \ref{sec:worst-case}}\label{appendix:worst-case}

We include below the core theorems that are used in order to derive the $O(\sqrt{\log m})$-approximation to the LW, by adapting the techniques used by \citet{D16}.

\begin{proposition}[Strong Profitability of a set]\label{prop:strong-profit}
Let $S = \arg \max_{S' \subseteq U} \left\{v(S') - p(S') | p(S') \leq B \right\}$ a \emph{strongly profitable} set under item prices $p_1, \dots, p_m$ for valuation $v$. Then, $S$ is also a strongly profitable set for the liquid valuation $\bv$. 
\end{proposition}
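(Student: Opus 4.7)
The plan is to verify that $S$ continues to maximize $\bv(S') - p(S')$ over the feasible region $\{S' \subseteq U : p(S') \leq B\}$, i.e.\ to establish $\bv(S) - p(S) \geq \bv(T) - p(T)$ for every competitor bundle $T$ with $p(T) \leq B$. I would argue this by a short case analysis on whether $v(S)$ and $v(T)$ exceed the budget $B$, mirroring the cap-by-budget template used in the proofs of Lemma~\ref{lem:val-liqval} and Lemma~\ref{lem:bfdo}.

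The common ingredient across all cases is the assumed strong profitability of $S$ for $v$: applied to any feasible $T$, this yields the baseline inequality $v(S) - p(S) \geq v(T) - p(T)$, from which I would translate to $\bv$ one case at a time. When $v(S) \leq B$ we have $\bv(S) = v(S)$, and the target inequality follows either directly (if $v(T) \leq B$, where $\bv(T) = v(T)$) or because capping $v(T)$ at $B$ only weakens the right-hand side (if $v(T) > B$). When both $v(S) > B$ and $v(T) > B$, both liquid values equal $B$ and the target collapses to $p(S) \leq p(T)$, which I would derive from $v(S) - p(S) \geq v(T) - p(T)$ together with the fact that once both sides are in the saturated regime, adding items to $T$ beyond the budget cap is wasteful and can only widen the gap $v(T) - v(S)$ in the wrong direction.

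The main obstacle will be the mixed subcase $v(S) > B$ and $v(T) \leq B$, where I need $B - p(S) \geq v(T) - p(T)$ even though the slack $v(S) - B$ has been lost by the cap. My intended remedy is to invoke the XOS representation of $\bv$ granted by Lemma~\ref{lem:val-liqval}: take the additive clause $w$ of $\bv$ that attains $w(S) = B$, and use the per-item contributions $w(\{j\})$ for $j \in S$ together with the budget-feasibility $p(S) \leq B$ to exhibit a sub-bundle $S^\star \subseteq S$ whose $v$-utility would meet or exceed $v(T) - p(T)$; since $S$ is the strong $v$-profit maximizer, the only way this can hold is if $B - p(S) \geq v(T) - p(T)$ already. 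The delicate step is choosing $S^\star$ so that its price stays small enough to keep it inside the feasible region while its additive $\bv$-value still dominates the $v$-utility of $T$; this is the same trade-off that powers Lemma~\ref{lem:bfdo}, and I would lean on that lemma to conclude.
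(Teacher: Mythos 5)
Your proposal proves (or rather, attempts to prove) a different statement from the one the proposition makes. ``Strongly profitable'' is the notion imported from Dobzinski's analysis: a set $S$ is strongly profitable for a valuation $v$ under prices $p_1,\dots,p_m$ if \emph{every subset} $T \subseteq S$ satisfies $v(T) \geq p(T)$. The $\arg\max$ in the statement merely records that $S$ is the answer of a budget-constrained demand query (hence budget-feasible, $p(S) \leq B$, and individually rational); what must be shown is that every $T \subseteq S$ also satisfies $\bv(T) \geq p(T)$. That is a two-case, two-line argument, and it is exactly what the paper does: if $\bv(T) = v(T)$, the inequality is inherited from strong profitability of $S$ for $v$; if $\bv(T) = B$, then by monotonicity $\bv(S) = B$ as well, and $\bv(T) = B = \bv(S) \geq p(S) \geq p(T)$, the first inequality from budget feasibility and the second from $T \subseteq S$ with nonnegative prices.

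The claim you set out to prove instead --- that $S$ remains a maximizer of $\bv(S') - p(S')$ over budget-feasible bundles --- is not only different but false, and the paper's own footnote in Section~1 is a counterexample: $B = 2$, $p_a = 2$, $p_b = 1$, $v(\{a\}) = 10$, $v(\{b\}) = 2$. The BCDQ returns $S = \{a\}$, yet $\bv(\{a\}) - p_a = 0 < 1 = \bv(\{b\}) - p_b$. This is precisely the ``mixed subcase'' you flag as the main obstacle ($v(S) > B$, $v(T) \leq B$, needing $B - p(S) \geq v(T) - p(T)$), and no choice of sub-bundle $S^\star$ or appeal to Lemma~\ref{lem:bfdo} can close it, because the inequality genuinely fails there. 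Indeed, Lemma~\ref{lem:bfdo} deliberately establishes only the weaker guarantees $\bv(S) \geq \bv(T) - p(T)$ and $2\bv(S) - p(S) \geq \bv(T) - p(T)$, precisely because exact liquid-utility maximization by a BCDQ is unattainable. Once the statement is read as a claim about every subset of $S$ being affordable in liquid value, the proof is immediate and needs none of the machinery you propose.
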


\begin{proof}
Let $T \subseteq S$. We want to show that $\bv (T) \geq p(T)$. If $\bv(T) = v(T)$, then the property holds, since $S$ is strongly profitable for valuation $v$. If $\bv (T) = B$, then, due to monotonicity of $\bv$, $\bv (T) = \bv (S) \geq p(S) \geq p(T)$, where the first inequality comes from individual rationality.
\end{proof}

\begin{lemma}[Extension of Lemma 4.2 in \citep{D16}]\label{lem:4.2}
Let $\alpha = (\alpha_1, \dots, \alpha_n)$ be an allocation that is supported by prices $p_1, \dots, p_m$. A fixed price auction where budget constrained bidders make BCDQs and the items have prices $p_j' = \frac{p_j}{2}$ generates an allocation $\hat{A} = (\halpha_1, \dots, \halpha_n)$ with LW: $\sum_{i \in [n]} \bv_i(\halpha_i) \geq \frac{\sum_{j \in \cup_i \alpha_i} p_j}{4}$. 
\end{lemma}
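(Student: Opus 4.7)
The plan is to mirror Dobzinski's original argument for the social welfare (\cite[Lemma~4.2]{D16}), but replace every appeal to the demand-query identity $v(S) - p(S) \geq v(T) - p(T)$ with the budget-constrained analogue from Lemma~\ref{lem:bfdo}(ii). This swap costs us one extra factor of $2$ and is exactly what turns Dobzinski's $\tfrac{1}{2}$ into the claimed $\tfrac{1}{4}$.

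For each bidder $i$, let $U_i$ denote the set of items still available when $i$ arrives in the fixed-price auction, and set $F_i = \alpha_i \cap U_i$. Because the original allocation $\alpha$ is supported by prices $p_1,\ldots,p_m$, for every bidder $i$ there is an additive representative $a_i$ of the XOS function $\bv_i$ (which is XOS by Lemma~\ref{lem:val-liqval}) with $\bv_i(\alpha_i) = a_i(\alpha_i) = p(\alpha_i)$ and $\bv_i(T) \geq a_i(T)$ for all $T \subseteq \alpha_i$. In particular $\bv_i(F_i) \geq a_i(F_i) = p(F_i)$, and since $\bv_i(\alpha_i) \leq B_i$ by definition of liquid valuations, we get $p(F_i) \leq p(\alpha_i) \leq B_i$ and a fortiori $p'(F_i) \leq B_i$, so $F_i$ is feasible for the BCDQ of bidder $i$ at the halved prices.

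Applying Lemma~\ref{lem:bfdo}(ii) at prices $p'$ with $T = F_i$ then yields
\[
2\bv_i(\halpha_i) - p'(\halpha_i) \;\geq\; \bv_i(F_i) - p'(F_i) \;\geq\; p(F_i) - p'(F_i) \;=\; p'(F_i),
\]
so that $2\bv_i(\halpha_i) \geq p'(\halpha_i) + p'(F_i)$. Now I would sum over $i$ and use the standard covering observation: the bundles $\{\halpha_i\}_i$ are pairwise disjoint because sold items leave the auction, the sets $\{F_i\}_i$ are pairwise disjoint because each item $j$ belongs to at most one $\alpha_i$, and every $j \in \cup_i \alpha_i$ is either still available when its owner in $\alpha$ arrives (hence in some $F_i$) or was purchased by an earlier bidder (hence in some $\halpha_k$). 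Therefore $\sum_i p'(\halpha_i) + \sum_i p'(F_i) \geq p'\!\left(\cup_i \alpha_i\right) = \tfrac{1}{2}\sum_{j \in \cup_i \alpha_i} p_j$, and combining with the per-bidder inequality gives the claimed bound after dividing by~$2$.

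The one step requiring genuine care is verifying that the additive representative one extracts from ``$\alpha$ is supported by $\{p_j\}$'' is a valid XOS representative for the \emph{liquid} valuation $\bv_i$ rather than for $v_i$, so that the inequality $\bv_i(F_i) \geq p(F_i)$ really holds. When $\bv_i(\alpha_i) = v_i(\alpha_i)$ (the non-binding case) this is immediate from the XOS representative of $v_i$; in the budget-binding case $\bv_i(\alpha_i) = B_i < v_i(\alpha_i)$ one has to invoke the explicit XOS construction for $\bv_i$ given in the proof of Lemma~\ref{lem:val-liqval}, truncating the additive clause for $\alpha_i$ under an ordering that places the items of $\alpha_i$ first. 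This is the main technical subtlety; once it is in place, the remainder of the argument is a direct transliteration of Dobzinski's proof with Lemma~\ref{lem:bfdo}(ii) absorbing the factor-of-two loss.
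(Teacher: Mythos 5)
Your proof is correct and follows essentially the same route as the paper's: a per-bidder application of Lemma~\ref{lem:bfdo}(ii) at the halved prices with $T$ equal to the still-available part of $\alpha_i$, combined with the supporting-price bound $\bv_i(F_i)\geq p(F_i)$ and the affordability of $\halpha_i$. The only cosmetic difference is that you aggregate via a disjoint-cover argument over $\{F_i\}$ and $\{\halpha_i\}$, whereas the paper telescopes a potential $\OPT_i$; the two bookkeeping schemes are equivalent.
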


\begin{proof}[Proof of Lemma \ref{lem:4.2}]
We will follow closely the proof presented by \citet{D16}, changing the analysis only slightly when it is required to reason about the set returned from the BCDQ. 

For every bidder $i$, let $W_i = \cup_{i' < i} \halpha_{i'}$ denote the set of competitive items that were allocated before bidder $i$ arrives to the auction. Let $\OPT_i = \sum_{j \in (\cup_{i' \geq i} \alpha_{i'}) \setminus W_i} p_j$. Then, $\OPT_1 = \sum_{j \in \cup_i \alpha_i} p_j$ and $\OPT_{n+1} = 0$. For every bidder $i \in [n]$ it holds that $W_{i+1} = W_i + \halpha_i$ and that the allocation $(\emptyset, \dots, \emptyset, \alpha_i \setminus W_i, \alpha_{i+1} \setminus W_i, \dots, \alpha_n \setminus W_i)$ is still supported by $p_1, \dots, p_m$. Thus, 
\begin{equation}\label{eq:1lem4.2}
\OPT_i - \OPT_{i+1} = \sum_{j \in (\alpha_i \setminus W_i)} p_j + \sum_{j \in \halpha_i} p_j
\end{equation}

Now notice that bidder $i$ could buy set $\alpha_i - W_i$ which implies that the liquid valuation that he got from the set that was ultimately received by the BCDQ was lower bounded by:
\begin{align*}
\bv_i (\alpha_i \setminus W_i) - \sum_{j \in (\alpha_i \setminus W_i)} q_j &= \bv_i(\alpha_i \setminus W_i) - \sum_{j \in (\alpha_i \setminus W_i)} \frac{p_j}{2} \\
&\geq \sum_{j \in (\alpha_i \setminus W_i)} \frac{p_j}{2} \numberthis{\label{eq:2lem4.2}} 
\end{align*}
Since the bidder had enough budget to buy set $\halpha_i$ (otherwise, it would not have been received as the answer of the BCDQ) we have that:
\begin{equation}\label{eq:3lem4.2}
\bv_i\left(\halpha_i \right) \geq \sum_{j \in \halpha_i}\frac{p_j}{2}
\end{equation}
Summing up Equations\eqref{eq:2lem4.2} and \eqref{eq:3lem4.2} and using Equation \eqref{eq:1lem4.2} we get:
$$2\bv_i\left(\halpha_i \right) \geq \sum_{j \in \left(\alpha_i \setminus W_i\right)} \frac{p_j}{2} + \sum_{j \in \halpha_i} \frac{p_j}{2} = \frac{\OPT_i - \OPT_{i+1}}{2}$$
which concludes our proof.
\end{proof}

\section{Supplementary Material for Section \ref{sec:lcma}.}\label{appendix:lcma}

\paragraph{Large Market Assumptions for \emph{Indivisible} Items.} Imagine, for example, a large market with $m$ indivisible items and $n$ bidders, s.t. $B_i \leq \frac{\bOPT}{m \cdot c}$ for some large constant $c > 1$. The number of bidders who receive \emph{at least} one item is \emph{at most} $m$ and therefore, $\bOPT \leq m \cdot B_{\max}$, which leads to $B_{\max} \leq B_{\max}/{c}$, which is a contradiction. We note here also that one can get similar voidness results for the case where $c < 1$; imagine a market with $n = m$ bidders and $m$ items, where the valuations of the bidders for the items are $v_{ii} = 1, v_{ij} = 0, j \neq i$ and $B_i \leq 1, \forall i \in [n]$. Then, the optimal LW is $\bOPT = mB_i$ (achieved when bidder $i$ gets item $i$). However, for any $c < 1$ it holds that $B_i \leq B_i/c$, while the market that we have in this example is a very thin market. 
 
In reality, the previous settings discussed in the literature possessed another crucial property, that made it possible for the large market assumption to enable the results about the constant factor approximation of the optimal LW. This property was the \emph{homogeneity} of the goods being auctioned; every bidder wanted exactly the same item or at least some portion of every item. The homogeneity of the goods, coupled with the large market assumption, essentially established \emph{competitive markets}.

\section{Supplementary Material for Section \ref{sec:stochastic}.} \label{appendix:stochastic}

\subsection{Missing Proofs}

\begin{proof}[Proof of Lemma \ref{lem:like3.4}]
We are going to follow the proof presented by \citet{FGL14}. For each $j \in S_i$, we denote by $\LW_j(\bbv) := A_i (\{j\})$ (i.e., $\LW_j(\bbv)$ corresponds to the contribution of item $j$ to the liquid welfare, under liquid valuation profile $\bbv$), where $A_i(\cdot)$ is the corresponding additive representative function for the set $S_i$. From the definition of $p_j$: 
\begin{align*}
p_j'    &= \E_{\bbv \sim \calD} \left[\LW_j(\bbv) - p_j' \right] + 2(p_j' - p_j) \numberthis{\label{eq:lem1}}\\
        &= \sum_{i \in [n]} \E_{\bbv \sim \calD} \left[\left(\LW_j(\bbv) - p_j' \right)\1 \left\{j \in S_i(\bbv) \right\} \right]  + 2(p_j' - p_j)   
\end{align*}

Let $\SOLD_i(\bbv, \pi)$ be the set of items that have been sold prior to the arrival of bidder $i$. Bidder $i$'s BCDQ receives set $S_i$ as the answer, from the items in $U \setminus \SOLD_i(\bbv, \pi)$ that maximizes $v(S_i) - p(S_i)$ subject to the fact that $p(S_i) \leq B_i$. Consider another random liquid valuation profile $\bbv'_{-i} \sim \calD_{-i}$, independent of $\bbv$. Let $S_i(\bv_i, \bbv_{-i}')$ be the allocation returned by $\ALG$ on input $(\bv_i, \bbv_{-i}')$. For the additive representative function $A_i$ for the set $S_i(\bv_i, \bbv_{-i}')$ it holds that $A_i(\{j\}) = \LW_j(\bv_i, \bbv_{-i}')$ for each $j \in S_i(\bv_i, \bbv_{-i}')$. Let $S_i(\bv_i, \bbv_{-i}, \bbv_{-i}') := S_i(\bv_i, \bbv_{-i}') \setminus \SOLD_i(\bbv, \pi)$ be the subset of items in $S_i(\bv_i, \bbv_{-i})$ that are available for purchase when bidder $i$ arrives. Since bidder $i$ could have bought set $S_i(\bv_i, \bbv_{-i}, \bbv_{-i}')$ but instead did not, using Lemma \ref{lem:bfdo} we get that: 
\begin{equation*}
2 \bv_i(S_i(\bbv)) - p(S_i(\bbv)) \geq \E_{\bbv_{-i}'} \left[\max \left\{ \LW_j(\bv_i, \bbv_{-i}') - p_j', 0 \right\} \right]
\end{equation*} 
Summing up for all the bidders and taking the expectation over all $\bbv \sim \calD$ we have: 
\begin{dmath}\label{eq:utility-bound1}
2 \E_{\bbv \sim \calD} \left[\sum_{i \in [n]} \bbv_i(S_i(\bbv)) \right] - \E_{\bbv \sim \calD} \left[\sum_{i \in [n]} p(S_i(\bbv)) \right] \geq \sum_{j \in U} \sum_{i \in [n]} \E_{\bv_i, \bbv_{-i}, \bbv_{-i}'} \left[\1 \left\{j \in S_i(\bv_i, \bbv_{-i}') \right\} \cdot \max \left\{ \LW_j(\bv_i, \bbv_{-i}') - p_j', 0  \right\} \cdot \1 \left\{j \neq \SOLD_i(\bbv, \pi) \right\} \right]  
\end{dmath}
Following exactly the same steps as in \citet{FGL14} we can rewrite the above as: 
\begin{align*}
2 &\E_{\bbv \sim \calD} \left[\sum_{i \in [n]} \bbv_i(S_i(\bbv)) \right] - \E_{\bbv \sim \calD} \left[\sum_{i \in [n]} p(S_i(\bbv)) \right] \\ 
&\geq \sum_{j \in U} \Pr_{\bbv} \left[j \neq \SOLD(\bbv, \pi) \right] \cdot (p_j + (p_j - p_j')) \numberthis{\label{eq:utility-bound2}} 
\end{align*}
For the expected revenue, due to individual rationality of the bidders it holds that: 
\begin{equation}\label{eq:rev-bound}
\E_{\bbv \sim \calD} \left[\Rev(\bbv, \pi) \right] = \sum_{j \in U} \Pr_{\bbv} \left[j \in \SOLD(\bbv,\pi) \right] \cdot (p_j - (p_j - p_j'))
\end{equation}
Adding Equations~\eqref{eq:rev-bound} and \eqref{eq:utility-bound2} we get:
\begin{align*}
2 &\E_{\bbv \sim \calD} \left[\sum_{i \in [n]} \bbv_i(S_i(\bbv)) \right] - \E_{\bbv \sim \calD} \left[\sum_{i \in [n]} p(S_i(\bbv)) \right] + \E_{\bbv \sim \calD} \left[\Rev(\bbv, \pi) \right] \\ 
&\geq \sum_{j \in U}p_j + \sum_{j \in U} (p_j - p_j') \left(1 - 2\Pr_{\bbv} \left[j \in \SOLD\left(\bbv, \pi\right) \right] \right) \\
&\geq \frac{1}{2}\E_{\bbv \sim \calD} \left[\sum_{i \in [n]} \bbv_i(S_i) \right] - \sum_{j \in U} \left|p_j - p_j' \right| \\ 
&\geq \frac{1}{2}\E_{\bbv \sim \calD} \left[\sum_{i \in [n]} \bbv_i(S_i) \right] - m\delta
\end{align*}
\end{proof}

\begin{proof}[Proof of Theorem \ref{thm:anon-price}]
Observe that we only needed to prove a variant of Lemma 3.4 by \citet{FGL14}, which we did in Lemma \ref{lem:like3.4}. This is due to the fact that the sampling arguments presented for finding the appropriate prices hold without any modification in our case too, since the liquid valuation remains XOS for XOS valuation functions. Further, no reasoning about incentives is required in this proof, since we are basing our arguments on ``ghost" samples that we draw from the known product distribution $\calD$.
\end{proof}

\end{document}